\newtheorem{theorem}{Theorem}
\newtheorem{lemma}{Lemma}
\newcommand{\Factors}{\mathit{Factors}}
\newcommand{\Occ}{\mathsf{Occ}}
\newcommand{\STree}{\mathsf{STree}}
\newcommand{\slink}{\mathsf{slink}}
\newcommand{\SA}{\mathsf{SA}}
\newcommand{\LCP}{\mathsf{LCP}}
\newcommand{\LG}{\mathsf{LGOcc}}
\newcommand{\len}{\mathit{len}}
\newcommand{\occ}
\newcommand{\bit}{\mathsf{bit}}
\title{
  $O(n \log n)$-time text compression \\ by LZ-style longest first substitution
}
\author{Akihiro~Nishi}
\author{Yuto~Nakashima}
\author{Shunsuke~Inenaga}
\author{Hideo~Bannai}
\author{Masayuki~Takeda}
\affil{\textit{Department of Informatics, Kyushu University, Japan} \\
\texttt{\small \{akihiro.nishi, yuto.nakashima, inenaga, bannai, takeda\}@inf.kyushu-u.ac.jp}
}
\date{}
\begin{document}
\maketitle

\begin{abstract}
  Mauer et al. [A Lempel-Ziv-style Compression Method for Repetitive Texts,
  PSC 2017] proposed a hybrid text compression method called \emph{LZ-LFS}
  which has both features of Lempel-Ziv 77 factorization and longest first
  substitution.
  They showed that LZ-LFS can achieve better compression ratio
  for repetitive texts, compared to some state-of-the-art compression
  algorithms.
  The drawback of Mauer et al.'s method is that
  their LZ-LFS compression algorithm takes $O(n^2)$ time
  on an input string of length $n$.
  In this paper, we show a faster LZ-LFS compression algorithm
  that works in $O(n \log n)$ time.
  We also propose a simpler version of LZ-LFS that can be computed
  in $O(n)$ time.
\end{abstract}

\section{Introduction}

\emph{Text compression} is a task to
compute a small representation of an input text (or string).
Given a vast amount of textual data that has been produced to date,
text compression can play central roles in
saving memory space and reducing data transmission costs.

\emph{Lempel-Ziv 77} (\emph{LZ77})~\cite{LZ77}
is a fundamental text compression method
that is based on a greedy factorization of the input string.
LZ77 factorizes a given string $w$ of length $n$
into a sequence of non-empty substrings $f_1, \ldots, f_k$ such that
(1) $w = f_1 \cdots f_k$ and
(2) each factor $f_i$ is the longest prefix of $w[|f_1 \cdots f_{i-1}|+1..n]$
that has an occurrence beginning at a position
in range $[1..|f_1 \cdots f_{i-1}|]$ (this is a self-reference variant),
or $f_i = c$ if it is the leftmost occurrence of the character $c$ in $w$.
Each factor $f_i$ in the first case is encoded
as a reference pointer to one of its previous occurrences in the string.
LZ77 and its variants are basis of many text compression programmes,
such as gzip.

In the last two decades, \emph{grammar compression} has also gathered much attention.
Grammar compression finds a small context-free grammar
which generates only the input string.
Since finding the smallest grammar representing a given string is NP-hard~\cite{LZSS,Storer77},
various kinds of efficiently-computable greedy grammar compression algorithms
have been proposed.
The most well-known method called Re-pair~\cite{LarssonDCC99}
is based on a most frequent first substitution approach,
such that most frequently occurring bigrams (substrings of length 2)
are replaced with new non-terminal symbols recursively,
until there are no bigrams with at least two non-overlapping occurrences.
An alternative is a \emph{longest first substitution} (\emph{LFS}) approach,
where longest substrings that have at least two non-overlapping
occurrences are replaced with new non-terminal symbols recursively,
until there are no substrings of length at least two with
at least two non-overlapping occurrences.

Recently,
Mauer et al.~\cite{MauerBO17} proposed a hybrid
text compression algorithm called \emph{LZ-LFS},
which has both features of LZ77 and LFS.
Namely, LZ-LFS finds a longest substring which occurs
at least twice in the string,
replaces its selected occurrences with a special symbol $\#$,
and encodes each of them as a reference to its leftmost occurrence.
This is continued recursively, until there are no substrings of length
at least two which occur at least twice in the string.
The details on how the occurrences to replace are selected
can be found in~\cite{MauerBO17}
as well as in a subsequent section in this paper.
Mauer et al. showed that LZ-LFS can have good practical performance
in compressing repetitive texts.
Indeed, in their experiments, 
the compression ratio of LZ-LFS outperforms
that of some state-of-the-art compression algorithms
on data sets from widely-used corpora.
The drawback, however, is that Mauer et al.'s compression algorithm
for LZ-LFS takes $O(n^2)$ time for input strings of length $n$.

In this paper, we focus on a theoretical complexity for
computing LZ-LFS, and propose a faster LZ-LFS algorithm
which runs in $O(n \log n)$ time with $O(n)$ space.
Our algorithm is based on Nakamura et al.'s algorithm
for LFS-based grammar compression~\cite{nakamura09:_linear_longest_first_}.
Although Nakamura et al.'s algorithm is quite involved,
our algorithm for LZ-LFS is much less involved
due to useful properties of LZ-LFS.
We also show that a simplified version of LZ-LFS
can be computed in $O(n)$ time and space with slight modifications
to our algorithm.

\section{Preliminaries}

\subsection{String notations}

Let $\Sigma$ be an alphabet.
An element of $\Sigma^*$ is called a \emph{string}.
Strings $x$, $y$, and $z$ are said to be a \emph{prefix}, 
\emph{substring}, and \emph{suffix} of string $w=xyz$, respectively.

The length of a string $w$ is denoted by $|w|$. 
The empty string is denoted by $\varepsilon$, that is, $|\varepsilon|=0$.
Let $\Sigma^+=\Sigma^* \setminus \{\varepsilon\}$.
The $i$-th character of a string $w$ is denoted by $w[i]$  for $1 \leq i \leq |w|$,
and the substring of a string $w$ that begins at position $i$ and
ends at position $j$ is denoted by $w[i..j]$ for $1 \leq i \leq j \leq |w|$.
For convenience, let $w[i..j]=\varepsilon$ for $j < i$, and $w[i..] = w[i..|w|]$ for $1\leq i \leq |w|$.

An \emph{occurrence} of a substring $x$ of a string $w$
is an interval $[i..i+|x|-1]$ such that $w[i..i+|x|-1] = x$.
For simplicity, we will sometimes call the beginning position $i$
of $x$ as an occurrence of $x$ in $w$.
Let $\Occ_w(x)$ denote the set of the beginning positions
of the occurrences of $x$ in $w$.
If $x$ does not occur in $w$, then $\Occ_w(x) = \emptyset$. 

If $|\Occ_w(x)| \geq 2$, then $x$ is said to be a \emph{repeat} of $w$.
A repeat $x$ of $w$ is said to be a \emph{longest repeat} (\emph{LR}) of $w$
if there are no repeats of $w$ that are longer than $x$.
We remark that there can exist more than one LR for $w$ in general.
A repeat $y$ of $w$ is said to be a \emph{maximal repeat} of $w$
if for any characters $a,b \in \Sigma$,
$|\Occ_{w}(ay)| < |\Occ_{w}(y)|$ and $|\Occ_{w}(yb)| < |\Occ_w(y)|$.
We also remark that any longest repeat of $w$
is a maximal repeat of $w$.

Let $I = \{i_1, \ldots, i_k\} \subseteq \Occ_w(x)$
be a (sub)set of occurrences of a repeat $x$ in $w$
such that $k \geq 2$ and $i_1 < \cdots < i_k$.
The occurrences in $I$ are said to be \emph{overlapping}
if $i_1 + |x| - 1 \geq i_k$,
and are said to be \emph{non-overlapping}
if $i_j + |x| - 1 < i_{j+1}$ for all $1 \leq j < k$.

\subsection{Suffix trees}

Assume that any string $w$ terminates with a unique symbol $\$$
which does not occur elsewhere in $w$.
The \emph{suffix tree} of a string $w$, denoted $\STree(w)$,
is a path-compressed trie
such that each edge is labeled with a non-empty substring of a string of $w$,
each internal node has at least two children,
the labels of all out-going edges of each node begin with
mutually distinct characters,
and each suffix of $w$ is spelled out by a path
starting from the root and ending at a leaf.
Because we have assumed that $w$ terminates with a unique symbol $\$$,
there is a one-to-one correspondence between
the suffixes of $w$ and the leaves of $\STree(w)$.
The \emph{id} of a leaf of $\STree(w)$ is defined to be
the beginning position of the suffix of $w$ that it represents.

Each node of $\STree(w)$ is specifically called
as an \emph{explicit} node,
and in contrast a locus on an edge is called as an \emph{implicit} node.
For ease of explanation,
we will sometimes identify each node of $\STree(w)$
with the string obtained by concatenating the edge labels
from the root to that node.
In the sequel, the \emph{string depth} of a node implies
the length of the string that the node represents.

Each edge label $x$ is represented by a pair $(i, j)$ of positions
in $w$ such that $w[i..j] = x$,
and in this way $\STree(w)$ can be represented with $O(n)$ space.
Every explicit node $v$ of $\STree(w)$ except for the root node has
an auxiliary reversed edge called the {\em suffix link},
denoted $\slink(v)$, such that $\slink(v) = v'$
iff $v'$ is a suffix of $v$ and $|v'|+1 = |v|$.
Notice that if $v$ is a node of $\STree(w)$,
then such node $v'$ always exists in $\STree(w)$.
$\STree(w)$ can be constructed in $O(n)$ time and space
if a given string $w$ of length $n$ is drawn
from an integer alphabet of size $n^{O(1)}$~\cite{Farach-ColtonFM00},
or in $O(n \log \sigma)$ time and $O(n)$ space
if $w$ is drawn from a general ordered alphabet
and $w$ contains $\sigma$ distinct characters~\cite{Weiner,McC76,Ukk95}.

\section{Text compression by LZ-style longest first substitution}
\label{sec:LZ-LFS_prelim}

Mauer et al.~\cite{MauerBO17} proposed
a text compression method which is a hybrid of
the Lempel-Ziv 77 encoding (LZ)~\cite{LZ77} and
a grammar compression with
longest first substitution (LFS)~\cite{nakamura09:_linear_longest_first_},
which hereby is called \emph{LZ-LFS}.

\subsection{LZ-LFS}

Here we describe how LZ-LFS compresses a given string $w$.

Let $x$ be an LR of $w$,
and let $\ell$ be the leftmost occurrence of $x$ in $w$.
Let $\LG_w(x)$ denote the set of
non-overlapping occurrences of $x$ in $w$
that are selected in a left-greedy manner (i.e., greedily from left to right).
Notice that $\ell = \min(\LG_w(x)) = \min(\Occ_w(x))$.
An occurrence $i$ of $w$ is said to be of
\begin{itemize}
\item Type 1 if $i$ is the second leftmost occurrence of $x$
  (i.e., $i = \min(\Occ_w(x) \setminus \{\ell\})$) and
  the occurrences $\ell$ and $i$ overlap
  (i.e., $\ell+|x| -1 \geq i$).
\end{itemize}
Let $\ell'$ be the Type 1 occurrence of $x$ in $w$ if it exists,
and let
\begin{equation}
e =
\begin{cases}
  \ell'+|x|-1 & \mbox{if $\ell'$ exists,}\\
  \ell+|x|-1 & \mbox{otherwise.}
\end{cases}
\label{eqn:e}
\end{equation}
An occurrence $i$ of $x$ in $w$ is said to be of
\begin{itemize}
\item Type 2 if $i$ is the leftmost occurrence of $x$
  after $e$ and
  there is no non-overlapping occurrence of $x$ to the right of $i$
  (i.e., $\{i\} = \LG_{w[e+1..]}(x)$).

\item Type 3 if $i$ is a left-greedily selected
  occurrence of $x$ after $e$ (i.e., $i \in \LG_{w[e+1..]}(x)$) and
  there are at least two such occurrences of $x$
  (i.e., $|\LG_{w[e+1..]}(x)| \geq 2$).

\item Type 4 otherwise.
\end{itemize}
Note that Type 2 and Type 3 occurrences of $x$ cannot simultaneously exist.

LZ-LFS is a recursive greedy text compression method which works as follows:
Given an input string $w$,
LZ-LFS first finds an LR $x$ of $w$ and picks up
its Type 1 occurrence (if it exists),
and either its Type 2 occurrence or its Type 3 occurrences.
Each of these selected occurrences of $x$ is
replaced with a special symbol $\#$ not appearing in $w$,
together with a pointer to the leftmost occurrence $\ell$ of $x$
which still remains in the modified string.
The encoding of this pointer differs for each type of occurrences,
see~\cite{MauerBO17} for details.
We remark that Type 4 occurrences are not selected for replacement
and all the Type 4 occurrences but
the leftmost occurrence of $x$ disappear in the modified string.
In the next step, LZ-LFS finds an LR of the modified string
which does \emph{not} include $\#$,
and performs the same procedure as long as
there is a repeat in the modified string.

Let $w_k$ denote the modified string in the $k$th step.
Namely, $w_0 = w$ and $w_{k}$ is the string after
all the selected occurrences of an LR of $w_{k-1}$
have been replaced with $\#$.
LZ-LFS terminates when it encounters the smallest $m$
such that $w_m$ does not contain
repeats of length at least two which consists only of
characters from the original string $w$ (i.e., repeats without $\#$'s).

LZ-LFS computes a list $\Factors$ as follows:
Initially, $\Factors$ is an empty list.
For each occurrence $i$ of LR $x$ that has been replaced with $\#$,
a pair $(\ell, |x|)$ of its leftmost occurrence $\ell$
and the length $|x|$ is added to $\Factors$
if it is of Type 2 or the first occurrence of Type 3.
Otherwise (if it is of Type 1),
then a pair $(i-\ell, |x|)$ is added to $\Factors$.
These pairs are arranged in $\Factors$ in increasing order
of the corresponding occurrences in the input string.

LZ-LFS also computes an array $F$ as follows:
Suppose we have computed $w' = w_m$.
For each $1 \leq h \leq |F|$,
if the $h$-th $\#$ from the left in $w'$ replaced
a Type 1 occurrence of an LR, then $F[h] = 1$.
Similarly, if the $h$-th $\#$ from the left in $w'$ replaced
a Type 2 occurrence of an LR, then $F[h] = 2$.
For Type 3 occurrences,
$F[h] = 2+j$ if the $h$-th $\#$ from the left in $w'$
replaced the $j$-th LR that that has Type 3 occurrences.
This array $F$ can be computed e.g.,
by using an auxiliary array $A$ of length $n$,
where each entry is initialized to null.
For each occurrence $i$ of each LR $x$ that has been replaced with $\#$,
the type of the occurrence (Type 1, 2, or 3) is stored at $A[i]$.
After the final string $w' = w_m$ has been found,
non-null values of $A$ are extracted by a left-to-right scan,
and are stored in $F$ from left to right.
A tuple $(w', \Factors, F)$ is the output of
the compression phase of LZ-LFS.

To see how LZ-LFS compresses a given string,
let us consider a concrete example with string
\[
w = w_1 = \mathtt{abcabcaabcdabcacabc\$}.
\]
There are two LRs $\mathtt{abca}$ and $\mathtt{cabc}$ in $w$,
and suppose that $\mathtt{abca}$ has been selected to replace.
Below, we highlight the occurrences of $\mathtt{abca}$
with underlines:
\begin{eqnarray*}
  w_1 & = & \mathtt{abc\uwave{abca}abcd\uline{abca}cabc\$}. \\
   & & \setlength\ULdepth{-4mm}\uuline{\hspace{9mm}}
\end{eqnarray*}
\vspace*{-1cm}

The wavy-underlined occurrence of $\mathtt{abca}$ at position $4$
is of Type 1 since it overlaps with the leftmost occurrence of $\mathtt{abca}$
which is doubly underlined.
Then, pair $(3, 4)$ is added to $\Factors$,
where the first term $3$ is the distance from the occurrence
at position $4$ to the leftmost occurrence at position $1$,
and the second term $4$ is $|\mathtt{abca}|$.

The singly underlined occurrence of $\mathtt{abca}$
at position $12$ is of Type 2 since
it does not overlap with the leftmost occurrence of $\mathtt{abca}$,
and there are no occurrences of $\mathtt{abca}$ to its right.
Then, pair $(1, 4)$ is added to $\Factors$,
where $1$ is the leftmost occurrence of $\mathtt{abca}$ and $4 = |\mathtt{abca}|$.

These Type 1 and Type 2 occurrences of $\mathtt{abca}$ are replaced with
with $\#$, and the resulting string is
\[
w_2 = \mathtt{\uuline{abc}\#\uline{abc}d\#c\uline{abc}\$},
\]
of which $\mathtt{abc}$ is an LR.
Since neither the second occurrence nor the third one of $\mathtt{abc}$
overlaps with the leftmost occurrence of $\mathtt{abc}$,
both of these occurrences are of Type 3.
Hence, pair $(1, 3)$ is added to $\Factors$,
where $1$ is the leftmost occurrence of $\mathtt{abc}$
and $3 = |\mathtt{abc}|$.
Finally, we obtain
\[
w_3 = \mathtt{abc\#\#d\#c\#\$}.
\]
Since $w_3$ has no repeats of length at least two which does not contain $\#$'s,
LZ-LFS terminates here.
Together with this final string $w' = w_3$,
LZ-LFS outputs $\Factors = \langle (3,4),(1,3),(1,4) \rangle$
and $F = [1,3,2,3]$.
Recall that the pairs in $\Factors$ are arranged in
increasing order of the corresponding occurrences in the input string $w$.

Mauer et al.~\cite{MauerBO17} showed how to decompress
$(w', \Factors, F)$ to get the original string $w$ in $O(n)$ time.
On the other hand, Mauer et al.'s LZ-LFS compression algorithm for
computing $(w', \Factors, F)$ from the input string
$w$ of length $n$ uses $O(n^2)$ time and $O(n)$ space.
Their algorithm is based on the suffix array and
the LCP array of $w$~\cite{manber93:_suffix}.

In this paper, we propose a faster LZ-LFS compression algorithm
which computes $(w', \Factors, F)$ in $O(n \log n)$ time with $O(n)$ space,
which is based on suffix trees
and Nakamura et al.'s algorithm~\cite{nakamura09:_linear_longest_first_}
for a grammar compression with LFS.

\subsection{Differences between LZ-LFS and grammar compression with LFS}

Here, we briefly describe main differences between
LZ-LFS and grammar compression with LFS.
In the sequel, grammar compression with LFS will simply be called LFS.

The biggest difference is that while the output of LFS is
a context free grammar that generates only the input string $w$,
that of LZ-LFS is not a grammar.
Namely, in LFS each selected occurrence of the LR is replaced with
a new non-terminal symbol,
but in LZ-LFS each selected occurrence of the LR is represented
as a pointer to the left-most occurrence of the LR in the current string $w_k$.
This also implies that in LZ-LFS the left-most occurrence of the LR
can remain in the string $w_{k+1}$ for the next $(k+1)$-th step.
On the other hand, in LFS no occurrences of the LR are left
in the string for the next step.

Because of Type 1 occurrences,
a repeat which only has overlapping occurrences in the current string $w_k$
can become an LR in LZ-LFS.
On the contrary, since LFS is a grammar-based compression,
LFS always chooses a longest repeat which has non-overlapping occurrences.

The above differences also affect technical details of the algorithms.
Nakamura et al.'s algorithm for LFS maintains
an incomplete version of the sparse suffix tree~\cite{KarkkainenU96}
of the current string.
On the other hand, our algorithm for LZ-LFS maintains
the suffix tree of the current string $w_k$ in each $k$-th step.

\subsection{On parameters $\alpha$ and $\beta$}

The algorithm of Mauer et al.~\cite{MauerBO17}
uses the suffix array and the LCP array~\cite{manber93:_suffix}
of the input string $w$,
and finds an LR $x_k$ for $w_k$ at each $k$-th step
using a maximal interval of the LCP array.

The suffix array $\SA$ for a string $w$ of length $n$ is a permutation
of $[1..n]$ such that $\SA[j] = i$ iff
$w[i..]$ is the lexicographically $j$-th suffix of $w$.
The LCP array $\LCP$ for $w$ is an array of length $n$
such that $\LCP[1] = 0$ and
$\LCP[i]$ stores the length
of the longest common prefix of $w[\SA[i-1]..]$ and $w[\SA[i]..]$
for $2 \leq i \leq n$.

For a positive integer $p$,
an interval $[i..j]$ of $\LCP$ array of $w$
is called a \emph{$p$-interval} if
(1) $\LCP[i-1] < p$,
(2) $\LCP[k] \geq p$ for all $i \leq k \leq j$,
(3) $\LCP[k] = p$ for some $i \leq k \leq j$, and
(4) $\LCP[j+1] < p$ or $j = n$.
An interval $[i..j]$ of $\LCP$ array of $w$
is called a \emph{maximal interval} if
it is a $p$-interval for some $p \geq 1$
and the longest common prefix of length $p$
for all the corresponding suffixes $w[\SA[i]..], \ldots, w[\SA[j]..]$
is a maximal repeat of $w$.
In each step of Mauer et al.'s method, the algorithm picks up a maximal interval
as a candidate for an LR to replace.

Let $\bit(w')$, $\bit(F)$, and $\bit(\Factors)$ respectively denote
the average number of bits to encode a single character from $w'$,
an element of $F$, and an element of $\Factors$ with a fixed encoding scheme.
The original algorithm by Mauer et al.~\cite{MauerBO17}
uses two parameters $\alpha$ and $\beta$ such that
$\alpha = \frac{\bit(\Factors)}{\bit(w')}$ and
$\beta = 1 + \frac{\bit(F)}{\bit(w')}$.
In each $k$-th step, their algorithm performs
replacement of an LR $x_k$ of length $\len_k$
only if the following conditions holds:
\begin{equation}
  \len_k \geq \frac{\alpha}{s} + \beta,
  \label{eqn:alpha_beta}
\end{equation}
where $s$ denotes the number of Type 2 or Type 3 occurrences
of the LR $x_k$ in the current string $w_k$.
However, since the values of $\alpha$ and $\beta$ cannot be precomputed,
in their implementation of LZ-LFS,
they use ad-hoc pre-determined values for $\alpha$ and $\beta$.
In particular, they set $\alpha = 30$ and $\beta = 80$
as default values
in their experiments (see~\cite{MauerBO17} for details).

However, we have found that there exist a series of strings
for which Mauer et al.'s algorithm fails to recursively replace LRs
for \emph{any} pre-determined values for $\alpha$ and $\beta$.

Consider a series of strings
\[
w = aXab_0 aXab_1 \cdots aXab_{s}\$,
\]
where $s \geq 1$, $a, b_1, \ldots, b_s \in \Sigma$,
$a \neq b_i$ for any $0 \leq i \leq s$,
$b_i \neq b_j$ for any $0 \leq i \neq j \leq s$,
and $X \in (\Sigma \setminus \{a, b_0, \ldots, b_s, \$\})^+$.
This string $w = w_1$ has a unique LR $aXa$.
Hence we have $\len_1 = r+2$, where $r = |X|$.
Since there are $s > 1$ non-overlapping occurrences of $aXa$
which do not overlap with the left most occurrence of $aXa$ in $w$,
those occurrences are of Type 3.
For this LR $aXa$ to be replaced with $\#_1$,
Equation~(\ref{eqn:alpha_beta}) or alternatively
$r \geq \frac{\alpha}{s} + \beta -2$ needs to hold.
Now let us choose $1 \leq |X| = r < \beta - 1$
and $s \geq \alpha$.
Then, since $\frac{\alpha}{s} \leq 1$,
Equation~(\ref{eqn:alpha_beta}) never holds for such $r$.
Hence, the original algorithm of Mauer et al. does not
replace $aXa$ and tries to find a next LR (which can be shorter than $aXa$).
In this case, the second longest repeats are
$aX$ and $Xa$ of length $r+1$ each.
However, since neither is $aX$ nor $Xa$ a maximal repeat of $w$,
it is not represented by a maximal interval of the LCP array.
Hence, neither is $aX$ nor $Xa$ selected for replacement.
Moreover, note that even $X$ is not a maximal repeat of $w$,
and that there are no repeats of length at least two
consisting only of $a$ and/or $b_i$~($0 \leq i \leq s$).
Therefore, Mauer et al.'s algorithm terminates at this point
and does not compress this string $w = aXab_0 aXab_1 \cdots aXab_{s}\$$ at all,
even though it is highly repetitive and contains quite long repeats
(e.g., for Mauer et al.'s default value $\beta = 80$, $X$ can be as long as $78$).

We also remark that one can easily construct instances
where more candidates of LRs have to be skipped,
by adding other strings in a similar way to $X$ into the string,
e.g.,
$aXab_0 aXab_1 \cdots aXab_{s} aYa c_0 aYa c_1 \cdots aYa c_s\$$,
and so on.

Given the above observation,
in our algorithm that follows,
we will omit the condition of Equation~\ref{eqn:alpha_beta},
and will replace Type 1, 2, 3 occurrences of \emph{any} selected LR.

\section{$O(n \log n)$-time algorithm for LZ-LFS}
\label{sec:nlogn_algorithm}

In this section, we show the following result:
\begin{theorem} \label{theo:main_result}
  Given a string $w$ of length $n$,
  our algorithm for LZ-LFS works in $O(n \log n)$ time with $O(n)$ space.
\end{theorem}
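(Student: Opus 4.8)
The plan is to bound the total work done across all $m$ steps of the LZ-LFS computation by maintaining the suffix tree of the current string $w_k$ and showing that (a) each step can find a longest repeat and perform its replacements efficiently, and (b) the total length of the strings $w_k$ summed over all steps is controlled. The central data structure is $\STree(w_k)$, which has $O(n)$ space by the edge-label-as-position-pair encoding described in the preliminaries. Since each step replaces one or more occurrences of an LR with the single symbol $\#$, the string strictly shrinks (Type 2, Type 3, and Type 1 occurrences are all collapsed, and duplicate Type 4 occurrences disappear), so $|w_{k+1}| < |w_k|$ and $m = O(n)$; but a naive per-step suffix-tree rebuild would cost $O(n)$ each and give $O(n^2)$, so the $\log n$ factor must come from somewhere more delicate.

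**First I would** establish how to find an LR of $w_k$ that contains no $\#$. An LR corresponds to a deepest explicit node whose subtree contains at least two leaves, i.e.\ a deepest internal node in the suffix tree, subject to the constraint that its path label avoids $\#$. I would maintain the set of internal nodes keyed by string depth (for example in a balanced BST or a priority queue), which gives the $O(\log n)$ factor: querying and updating the deepest valid node costs $O(\log n)$ per operation. Here I would borrow directly from Nakamura et al.'s LFS machinery~\cite{nakamura09:_linear_longest_first_}, but exploit the simplification noted in the paper's own discussion --- because LZ-LFS keeps the leftmost occurrence in place and uses pointers rather than nonterminals, the bookkeeping of which occurrences survive is lighter than in the grammar-compression case. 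The Type 1/2/3/4 classification (Equation~\eqref{eqn:e}) is computed from $\Occ_{w_k}(x)$ by a single left-greedy scan, selecting non-overlapping occurrences and detecting the one overlap with $\ell$; this is linear in the number of occurrences of that particular LR.

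**The key accounting step** is to charge the total cost across all steps. Each replaced occurrence of an LR of length at least two consumes at least two positions of $w_k$ and produces one $\#$, so the number of replacements summed over all steps is $O(n)$; each replacement triggers local suffix-tree surgery (deleting the leaves/paths for the affected suffixes and re-inserting the suffix corresponding to the freshly created $\#$-boundary) costing $O(\log n)$ apiece. The depth-indexed structure is updated incrementally rather than rebuilt, so the per-step cost is proportional to the number of occurrences replaced in that step times $O(\log n)$, not to $|w_k|$. Summing the $O(\log n)$ charge over the $O(n)$ total replacements yields $O(n \log n)$ time, while every auxiliary structure ($\STree$, the arrays $A$, $F$, and the list $\Factors$) stays within $O(n)$ space.

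**The hard part will be** arguing that the suffix tree of $w_k$ can be transformed into that of $w_{k+1}$ by only local, amortized-$O(\log n)$ edits, rather than a global rebuild. Replacing several scattered occurrences of $x$ by $\#$ changes many suffixes simultaneously --- every suffix that spans a replaced occurrence is altered --- so I must show that the number of leaves whose labels actually change is bounded by (a constant times) the number of replacements in that step, and that suffix links and the maximal-repeat/LR structure can be repaired by touching only the neighborhood of the affected paths. This is precisely where LZ-LFS's special structure (pointers to the surviving leftmost occurrence, and the constrained Type classification) should make the surgery far simpler than in the full LFS setting; I would isolate this as a separate lemma establishing the incremental-update invariant, and then Theorem~\ref{theo:main_result} follows by the charging argument above.
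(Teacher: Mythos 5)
Your overall architecture (maintain $\STree(w_k)$ across steps and charge local tree surgery to consumed positions) matches the paper's, but you misplace the source of the $\log n$ factor and leave the genuinely hard step unproven. In the paper, finding the deepest internal node costs no $\log n$ at all: internal nodes are bucketed in an array $B_k$ indexed by string depth, and since $\len_k$ is monotonically non-increasing, the pointer to the deepest non-empty bucket only moves toward the root, giving $O(n)$ total --- your balanced BST / priority queue is unnecessary. The real $\log n$ arises exactly where you claim linear time: the occurrences of the LR $x_k$ are read off as the id's of the children of its node (which are all leaves, by Lemma~\ref{lem:LR_node_leaf_children}), but they arrive in tree order, not position order, so the left-greedy Type 1/2/3 classification requires \emph{sorting} them, costing $O(\sum_k d_k \log d_k) = O(n \log n)$ overall. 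Your ``single left-greedy scan \ldots linear in the number of occurrences'' silently assumes the occurrences are already sorted by position, which is the one thing the suffix tree does not give you for free.

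Second, the incremental update of $\STree(w_k)$ to $\STree(w_{k+1})$ --- which you correctly flag as the hard part --- is deferred to a lemma you do not prove, and your proposed charge of $O(\log n)$ per replacement is wrong as stated: replacing a single occurrence at position $i$ forces removal of the $\len_k-1$ leaves in the dead zone $[i+1..i+\len_k-1]$ and a suffix-link walk over the $\len_k$ suffixes of the affected zone $[i-\len_k..i-1]$, i.e.\ $\Theta(\len_k \log \sigma)$ work, which the paper amortizes against the $\len_k$ positions consumed by that replacement (with a separate argument for the leftmost occurrence, whose interval is \emph{not} consumed and can itself contain up to $\len_k$ further occurrences). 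The paper also needs the $\bullet$-padding trick so that edge labels encoded as position pairs in $w$ need not be relabeled, distinct symbols $\#_k$ per step together with Lemma~\ref{lem:distinct_child_labels} to keep sibling edge labels distinct after leaf-edge redirection, and Lemma~\ref{lem:no_hash_included} to guarantee that subsequent LRs avoid the $\#_j$'s. None of these appear in your sketch, so as written the proposal identifies the right skeleton but does not yet constitute a proof.
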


We begin with describing a sketch of our LZ-LFS algorithm.
Let $w$ be the input string of length $n$ and let $w_1 = w$.
As a preprocessing, we construct $\STree(w_1)$ in
$O(n \log \sigma)$ time and $O(n)$ space~\cite{Weiner,McC76,Ukk95},
where $\sigma \leq n$ is the number of distinct characters
that occur in $w$.

In the first step of the algorithm,
we find an LR $x_1$ of $w_1$ with the aid of $\STree(w_1)$.
Let $w_{k}$ denote the string in the $k$-th step of the algorithm.
For a technical reason,
when computing $w_{k+1}$ from $w_{k}$,
we use a special symbol $\#_k$ that does not occur in $w_{k}$,
and replace the selected occurrences of an LR $x_k$ in $w_{k}$ with $\#_k$.
The reason will become clear later.

For each $k$-th step,
we denote by $\len_k$ the length of an LR of $w_{k-1}$,
namely, $\len_k = |x_k|$.
At the end of each $k$-th step,
we update our tree so that it becomes identical to $\STree(w_{k+1})$,
so that we can find an LR $x_{k+1}$ for the next $(k+1)$-th step.


\subsection{How to find an LR $x_k$ using $\STree(w_{k})$}

Suppose that we maintain $\STree(w_{k})$ in each $k$-th step.
The two following lemmas are keys to our algorithm.
There, each $\#_k$ used at each $k$-th step
is regarded as a single character of length one,
rather than a representation of the LR of length $\len_k \geq 2$
that was replaced by $\#_k$.

\begin{lemma} \label{lem:no_hash_included}
  For each $k$-th step,
  let $v$ be any internal explicit node of $\STree(w_k)$
  of string depth at least two.
  Then, the string represented by $v$ does not contain
  $\#_j$ with any $1 \leq j < k$.
\end{lemma}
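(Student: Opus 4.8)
The plan is to prove a statement slightly stronger than the lemma. Write $P(k)$ for the assertion that no repeat of $w_k$ of length at least two contains any $\#_j$ with $1 \le j < k$; since every internal explicit node of string depth at least two represents a repeat of length at least two, $P(k)$ implies the claim. The engine of the whole argument is one observation about longest repeats: if $x$ is an LR of a string $u$, then the occurrences of $x$ in $u$ are followed by pairwise distinct characters and preceded by pairwise distinct characters. Indeed, if two occurrences $i \ne i'$ had $u[i+|x|] = u[i'+|x|] = c$, then $xc$ would be a repeat of length $|x|+1$, contradicting the maximality of $|x|$; the preceding case is symmetric, and the at most one boundary occurrence lacking a following (resp.\ preceding) character causes no trouble.

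To run an induction cleanly I would strengthen $P(k)$ to the invariant $Q(k)$: for every $j < k$, the occurrences of $\#_j$ in $w_k$ have pairwise distinct immediately-following characters and pairwise distinct immediately-preceding characters, where such a neighbour may itself be a $\#$-symbol. The first thing to verify is that $Q(k)$ already implies $P(k)$. From $Q(k)$ it follows that $\#_j\#_j$ occurs at most once in $w_k$, since two occurrences would give two copies of $\#_j$ sharing the following character $\#_j$. Now suppose some repeat $y$ with $|y| \ge 2$ contained a $\#_j$. Every substring of $y$ is itself a repeat, so $y$ cannot contain $\#_j\#_j$; hence the $\#_j$ inside $y$ has a genuine non-$\#_j$ neighbour \emph{inside} $y$ — a following character if the $\#_j$ is not the last symbol of $y$, otherwise a preceding one — and that neighbour is the same in all $\ge 2$ occurrences of $y$, producing two occurrences of $\#_j$ with a common neighbour and contradicting $Q(k)$.

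It then remains to prove $Q(k)$ by induction on $k$, which is where the real work lies. The base case $Q(1)$ is vacuous. For the step I assume $Q(k)$, note that it yields $P(k)$ and hence that the LR $x_k$ is $\#$-free, and I pass to $w_{k+1}$, obtained by replacing the selected (pairwise non-overlapping) occurrences of $x_k$ by $\#_k$. For the freshly created symbol $\#_k$, the following character of a new $\#_k$ is the original character $w_k[i+|x_k|]$ that followed the replaced occurrence $i$ — these are pairwise distinct by the observation — unless the next occurrence of $x_k$ was also replaced flush against it, in which case that character is $\#_k$. But an occurrence immediately followed by another occurrence is followed by $x_k[1]$, and by the observation at most one occurrence can be, so at most one $\#_k$ is followed by $\#_k$; since $\#_k$ is fresh and none of the original following characters equals it, distinctness is preserved, and the preceding side is symmetric.

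The delicate part, and the step I expect to be the main obstacle, is preserving distinctness for the \emph{old} symbols $\#_j$ with $j < k$. Using $P(k)$, no replaced $x_k$-block covers a position holding a $\#_j$, so each occurrence of $\#_j$ keeps its neighbour in $w_{k+1}$ unless a replaced occurrence of $x_k$ sat flush against it, in which case the neighbour becomes $\#_k$; this makes the case analysis exhaustive. Distinctness could only be destroyed if two occurrences of $\#_j$ simultaneously acquired $\#_k$ on the same side. On the following side this would require two distinct occurrences of $x_k$ each immediately preceded by $\#_j$; on the preceding side, two distinct occurrences of $x_k$ each immediately followed by $\#_j$. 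Both are forbidden by the observation applied to $x_k$ (distinct preceding, resp.\ following, characters of its occurrences), so at most one occurrence of $\#_j$ changes neighbour on each side; as $\#_k$ is distinct from every surviving neighbour, distinctness in $w_{k+1}$ follows. This establishes $Q(k+1)$ and closes the induction, and $Q(k)\Rightarrow P(k)$ then gives the lemma.
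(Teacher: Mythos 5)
Your proof is correct, but it takes a genuinely different route from the paper's. The paper disposes of the lemma in four lines by a direct appeal to the longest-first strategy: a repeat $v$ of length at least two containing $\#_j$ would, once $\#_j$ is expanded back to the replaced LR $x_j$, witness a repeat strictly longer than $\len_j$ already present at step $j$, contradicting the choice of $x_j$ as a longest repeat there; the paper compresses this into the chain $\len_k \geq |v| > \len_j \geq \len_k$. You instead stay entirely inside $w_k$ and carry the stronger invariant $Q(k)$ that each $\#_j$ occurs with pairwise distinct one-character left and right contexts, i.e.\ $c\#_j$ and $\#_j c$ occur at most once for every symbol $c$; the lemma then falls out because a repeat of length at least two through $\#_j$ would duplicate one of these contexts. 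What your approach buys is that it avoids the expansion step altogether --- precisely the part the paper leaves implicit, and where its written inequality is hardest to justify literally, since a repeat containing $\#_j$ is not eligible to be the LR $x_k$ and so $\len_k \geq |v|$ does not follow from the definitions without bookkeeping of your kind. The price is a longer induction with a delicate case analysis of how the neighbours of old hash symbols change under a replacement. One point you should make explicit: when you invoke your key observation for $x_k$ with a hash symbol as the common neighbour, the contradiction is not with ``$x_k$ is the longest $\#$-free repeat'' (which is how the algorithm selects $x_k$) but with $P(k)$ itself, since $\#_j x_k$ or $x_k \#_j$ would be a repeat of length at least two containing $\#_j$; equivalently, $P(k)$ guarantees that the longest $\#$-free repeat is a longest repeat outright, which is exactly what the observation needs. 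Since you already have $P(k)$ in hand at that point in the induction, this is a one-sentence addition rather than a gap.
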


\begin{proof}
  Assume on the contrary that
  the string represented by $v$ contains $\#_j$ for some $1 \leq j < k$.
  Since $v$ is an internal explicit node of $\STree(w_k)$,
  $v$ occurs at least twice in $w_k$.
  Since $|v| \geq 2$,
  we have that $\len_k \geq |v| > \len_j$.
  However, this contradicts the longest first strategy
  such that $\len_j \geq \len_k$ must hold.
\end{proof}

\begin{lemma} \label{lem:LR_node_leaf_children}
  For each $k$-th step,
  any LR of $w_{k}$ is represented by an internal node
  of $\STree(w_{k})$.
\end{lemma}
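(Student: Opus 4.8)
The plan is to show that any LR $x$ of $w_k$ is \emph{right-branching} in $w_k$, meaning that the occurrences of $x$ are followed by at least two distinct characters, since in a suffix tree the explicit branching nodes are exactly the loci of such right-branching substrings. The whole argument rests on the longest-first property together with the uniqueness of the terminator.

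First I would record two elementary structural facts about an LR $x$ of $w_k$. Since $x$ is a repeat, $|\Occ_{w_k}(x)| \ge 2$; and since the terminator $\$$ occurs exactly once in $w_k$ (it is never replaced), $x$ cannot contain $\$$, because otherwise every occurrence of $x$ would have to contain the unique position carrying $\$$, forcing $|\Occ_{w_k}(x)| \le 1$. In particular no occurrence of $x$ ends at the last position of $w_k$, so every occurrence $i \in \Occ_{w_k}(x)$ is followed by a well-defined character $w_k[i+|x|]$.

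Next comes the core step. Consider the set of following characters $\{w_k[i+|x|] : i \in \Occ_{w_k}(x)\}$. If all occurrences of $x$ shared the same following character $c$, then each occurrence of $x$ would extend to an occurrence of $xc$ and vice versa, so $|\Occ_{w_k}(xc)| = |\Occ_{w_k}(x)| \ge 2$; thus $xc$ would be a repeat of $w_k$ strictly longer than $x$, contradicting the assumption that $x$ is a longest repeat. (Note that $c \ne \$$ is automatic, since $\$$ cannot follow two distinct occurrences.) Hence the occurrences of $x$ are followed by at least two distinct characters, so the path spelling $x$ from the root splits into at least two out-going edges and therefore ends at an explicit node with at least two children. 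Because $x$ is a non-empty repeat, $|x| \ge 1$, so this node is not the root; it is thus an internal node, as claimed. Alternatively I could invoke the remark from the preliminaries that every LR is a maximal repeat and read off right-maximality directly from $|\Occ_{w_k}(xb)| < |\Occ_{w_k}(x)|$, but the self-contained longest-first argument above is shorter.

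I do not expect a serious obstacle here, as this is essentially a standard characterization of suffix-tree internal nodes. The only care needed is in the bookkeeping around $\$$ and the special symbols $\#_j$. The $\#_j$ are treated as ordinary characters throughout, so the argument is insensitive to whether $x$ contains any $\#_j$ (and in particular this lemma does not rely on Lemma~\ref{lem:no_hash_included}). The single genuinely delicate point is the uniqueness of $\$$, which does double duty: it rules out the ``occurrence at the end'' edge case and forces $c \ne \$$ in the right-branching step.
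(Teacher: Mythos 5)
Your proof is correct and rests on the same key observation as the paper's: if the locus of a longest repeat $x$ were not a branching node, one could extend $x$ to a strictly longer repeat, contradicting the longest-first choice. The paper packages this as a contradiction via the lower endpoint $v$ of the edge containing the implicit locus (so $v$ itself is the longer repeat), while you argue directly via right-maximality using the single-character extension $xc$ and handle the end-of-string case with the unique terminator $\$$; the two phrasings are interchangeable.
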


\begin{proof}
  Suppose on the contrary that
  an LR $x$ of $w_{k}$ is represented by an implicit node
  of $\STree(w_{k})$,
  and let $(u, v)$ be the edge on which $x$ is represented.
  Note that $|v| > |x|$.
  Since $x$ is an LR, $x$ must occur at least twice in $w_{k}$
  and hence $v$ cannot be a leaf of $\STree(w_{k})$.
  This implies that $v$ is an internal branching node
  and hence $v$ occurs at least twice in $w_{k}$.
  However, this contradicts that $x$ is an LR of $w_k$.
\end{proof}

Based on Lemmas~\ref{lem:no_hash_included} and \ref{lem:LR_node_leaf_children},
we can find an LR at each step as follows.
In each $k$-th step of our algorithm,
we maintain an array $B_{k}$ of length $n$ such that
$B_{k}[l]$ stores a list of all explicit internal
nodes of string depth $l$ that exist in $\STree(w_{k})$.
Hence, $B_{k}[\len_k]$ will be the leftmost entry of $B_k$
that stores a non-empty list of existing nodes.
We do not store nodes of string depth one.
Any node of string depth one represents
either a single character from the original string $w$
or $\#_j$ for some $1 \leq j < k$ which will never be replaced
in the following steps.
Therefore, $B_{k}[1]$ is always empty at every $k$-th step.

The initial array $B_1$ can easily be computed in $O(n)$ time
by a standard traversal on $\STree(w_1) = \STree(w)$.
We can also compute in $O(n)$ time the length $\len_1$ of an LR for $B_1$
in a na\"ive manner.
We then pick up the first element in the list stored at $B_1[\len_1]$
as an LR $x_1$ of $w_1$ to be replaced with $\#_1$.
After the replacement, we remove $x_1$ from the list,
and proceed to the next step.
In the next subsection, we will show how to efficiently
update $B_{k}$ to $B_{k+1}$.

The algorithm terminates when the string
contains no repeats of length at least two.
Let $w_m$ denote this string,
namely, the algorithm terminates at the $m$-th step.
In this last $m$-th step,
$\STree(w_m)$ consists only of the root, the leaves,
and possibly internal explicit nodes of string depth one.
See also an example in Appendix.

In the next subsection, we will show how to efficiently
update $\STree(w_k)$ to $\STree(w_{k+1})$
and $B_k$ to $B_{k+1}$ in a total of $O(n)$ time
for all $k = 1, \ldots, m-1$.
We also remark that $m$ cannot exceed $n/2$ since at least
two positions are taken by the replacement of an LR at each step.

Now, let us focus on
how our algorithm works at each $k$-th step.
The next lemma shows how we can find the occurrences of
an LR of each step efficiently.

\begin{lemma}
  Given a node of $\STree(w_{k})$ which represents an LR $x_k$
  of $w_{k}$ at each $k$-th step,
  we can compute Type 1, 2, 3 occurrences of $x_k$ in $w_{k}$
  in a total of $O(n \log n)$ time and $O(n)$ space for all steps.
\end{lemma}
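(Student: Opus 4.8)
The plan is to reduce the task at each step to enumerating and sorting the occurrences of $x_k$, and then to charge the total work against the number of string positions that disappear over the course of the algorithm. Fix the $k$-th step and let $v$ be the given node of $\STree(w_k)$ representing $x_k$. By Lemma~\ref{lem:LR_node_leaf_children}, $v$ is an explicit internal node, and the set $\Occ_{w_k}(x_k)$ of beginning positions of $x_k$ is exactly the set of leaf ids in the subtree rooted at $v$. First I would collect these ids by a traversal of the subtree of $v$, which takes $O(|\Occ_{w_k}(x_k)|)$ time since the subtree has $O(|\Occ_{w_k}(x_k)|)$ nodes, and then sort them into increasing positional order in $O(|\Occ_{w_k}(x_k)| \log |\Occ_{w_k}(x_k)|)$ time. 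Given the sorted list, all of Type~1, 2, 3 are decided by a single left-to-right scan: the leftmost element is $\ell$, and together with the second element it decides via the test $\ell + \len_k - 1 \ge p_2$ whether a Type~1 occurrence $\ell'$ exists; this fixes $e$ through Equation~(\ref{eqn:e}); and a greedy pass over the elements exceeding $e$ (selecting an occurrence and jumping $\len_k$ positions ahead) produces $\LG_{w[e+1..]}(x_k)$, whose cardinality tells us whether its members are of Type~2 (when it is $1$) or Type~3 (when it is at least $2$). This scan is linear in the list length, so the per-step cost is dominated by the sort.

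The main obstacle is not the per-step procedure but bounding the summed list lengths; concretely, I claim $\sum_{k} |\Occ_{w_k}(x_k)| = O(n)$. Let $\Delta_k = |w_k| - |w_{k+1}|$ be the number of positions lost at the $k$-th step, and let $r_k$ be the number of occurrences replaced by $\#_k$. Since each replaced occurrence has length $\len_k \ge 2$ and becomes a single symbol, $\Delta_k = r_k(\len_k - 1)$, and telescoping gives $\sum_k \Delta_k = n - |w_m| \le n$. It therefore suffices to establish $|\Occ_{w_k}(x_k)| = O(\Delta_k + 1)$. Writing $s_k = |\LG_{w[e+1..]}(x_k)|$ for the number of left-greedily selected (hence Type~2 or Type~3) occurrences to the right of $e$, note that these are pairwise non-overlapping, so every unselected occurrence to the right of $e$ must overlap the selected one immediately preceding it, i.e.\ must begin within a window of length $\len_k$ anchored there; hence there are at most $s_k \len_k$ occurrences to the right of $e$. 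At most $\len_k + 1$ further occurrences lie at or before $e$ (namely $\ell$, a possible Type~1 occurrence $\ell'$, and occurrences overlapping $\ell'$). When $s_k \ge 1$ we have $r_k \ge s_k$ and thus $\Delta_k \ge s_k(\len_k - 1) \ge \tfrac{1}{2}s_k \len_k$, giving $|\Occ_{w_k}(x_k)| = O(\Delta_k)$; when $s_k = 0$ only a Type~1 occurrence is replaced, and $|\Occ_{w_k}(x_k)| = O(\len_k) = O(\Delta_k + 1)$. Summing and using $m \le n/2$ proves the claim.

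Combining the two parts, the total running time is $\sum_k O(|\Occ_{w_k}(x_k)| \log |\Occ_{w_k}(x_k)|) \le (\log n)\sum_k O(|\Occ_{w_k}(x_k)|) = O(n \log n)$, and the space is $O(n)$ because a single reusable buffer of size $O(|\Occ_{w_k}(x_k)|) \le O(n)$ holds the occurrences of the current step while $\STree(w_k)$ itself occupies $O(n)$ space. The delicate point I expect to argue most carefully is the window/charging estimate $|\Occ_{w_k}(x_k)| = O(\Delta_k + 1)$, since it is exactly what lets us afford to touch, and to sort, every occurrence of the current LR without ever inspecting the (possibly numerous) occurrences that are destroyed as collateral of a neighbouring replacement.
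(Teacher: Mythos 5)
Your proposal is correct and follows essentially the same route as the paper: collect the leaf ids below the node for $x_k$, sort them in $O(d_k \log d_k)$ time, classify Type 1/2/3 by a single left-to-right greedy scan, and amortize the total sorting cost by charging every occurrence of $x_k$ to a window of length $O(\len_k)$ around a replaced occurrence, each of which permanently consumes $\len_k - 1$ positions. Your accounting via $\Delta_k$ is just a more explicit rendering of the paper's charging argument (including its handling of the up-to-$\len_k$ occurrences crowded near the leftmost occurrence $\ell$), so no further comparison is needed.
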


\begin{proof}
  It follows from Lemma~\ref{lem:LR_node_leaf_children} that
  all children of the node for $x_k$ are leaves in $\STree(w_{k})$.
  We sort all the leaves in increasing order of their id's
  (i.e., the beginning positions of the corresponding suffixes).
  If $d_k$ is the number of the above-mentioned leaves,
  then this can be done in $O(d_k \log d_k)$ time and $O(d_k)$ space
  by a standard sorting algorithm.
  It is clear that we can compute Type 1, 2, and/or 3 occurrences
  of $x_k$ in $w_{k}$ from this sorted list, in $O(d_k)$ time.

  Each occurrence $i$ of $x_k$ but the leftmost one either
  (a) is replaced with $\#_k$, or
  (b) overlaps with another occurrence of $x_k$ that is replaced with $\#_k$.
  In case (a), it is guaranteed that
  there will be no LRs that begin at position $i$ in the following steps,
  since LZ-LFS chooses repeats in a longest first manner.
  In case (b),
  there is another occurrence $j$ of $x_k$
  that is replaced with $\#_k$ and  $i \in [j+1..j+\len_k-1]$.
  Since these positions in this range $[j+1..j+\len_k-1]$
  are already taken by the replacement of $x_k$ with $\#_k$,
  there will be no LRs that begin at position $i$ in the following steps.
  One delicacy is the leftmost occurrence $\ell$ of $x_k$,
  since the corresponding interval $[\ell..\ell+\len_k-1]$ can contain
  up to $\len_k$ occurrences of $x_k$,
  and these positions may retain the original characters in
  the string $w_{k+1}$ for the next $(k+1)$-th step.
  However, since at least one occurrence of $x_k$ is always replaced,
  the cost of sorting the leaves whose id's are in range
  $[\ell..\ell+\len_k-1]$ can be charged
  to an occurrence of $x_k$ that is replaced with $\#_k$.

  Overall, the time cost to sort all $d_k$ children of $x_k$
  can be charged to the intervals of the occurrences of $x_k$ in $w_{k}$
  that are replaced with $\#_k$'s.
  Therefore, the total time cost for sorting the corresponding leaves
  in all $m$ steps is $O(\sum_{k=1}^m (d_k \log d_k)) = O(n \log n)$,
  where the equality comes from the fact that
  $\sum_{k=1}^m d_k = O(n)$ and $d_k \leq n$ for each $k$.

  The space complexity is clearly $O(n)$.
\end{proof}

\subsection{How to update $\STree(w_{k})$ to $\STree(w_{k+1})$}

In this subsection,
we show how to update $\STree(w_k)$ to $\STree(w_{k+1})$.

Let $i$ be any occurrence (Type 1, 2, or 3) of an LR $x_k$ in $w_{k}$
which will be replaced with $\#_k$ in the $k$-th step.
Since $|x_k| = \len_k \geq 2$,
the replacement with $\#_k$ will always shrink the string length.
However, it is too costly to relabel the integer pairs
for the suffix tree edge labels with the positions in the shrunken string.
To avoid this, we suppose that each selected occurrence
of $x_k$ is replaced with $\#_k \bullet^{\len_k-1}$,
where $\bullet$ is a special symbol that does not occur in the original string $w$.
Namely, $\#_k$ is now at position $i$
and positions $i+1, \ldots, i+\len_k-1$ are padded with $\bullet$'s.
This ensures that the length of $w_{k}$ remains $n$ for each $k$-th step,
and makes it easy for us to design our LZ-LFS algorithm.

If an occurrence of $x_k$ at position $i$ is replaced with $\#_k$,
then the positions in range $[i+1..i+\len_k-1]$ are taken away from the string.
This range $[i+1..i+\len_k-1]$ is therefore not considered in the following steps,
and is called a \emph{dead zone}.
Also, since any LRs in the following steps are of length at most $\len_k$,
it suffices for us only to take care of the
substrings in range $[i-\len_k,..i]$.
This range is called as an \emph{affected zone}.
See Figure~\ref{fig:dead_affected_zones} for illustration
of a dead zone and affected zone.

\begin{figure}[h]
  \centerline{
    \includegraphics[scale=0.6]{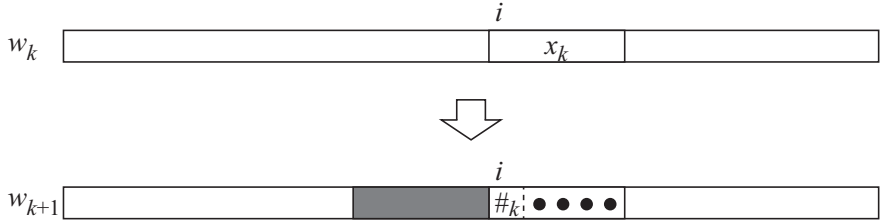}
  }
  \caption{An occurrence of LR $x_k$ at position $i$ in the current string $w_k$ is replaced with $\#_k$. In the next string $w_{k+1}$, the range padded with $\bullet$'s is the dead zone and the gray range is the affected zone for this occurrence of $x_k$ at position $i$.}
  \label{fig:dead_affected_zones}
\end{figure}

In our suffix tree update algorithm,
we will remove the leaves for the suffixes that begin in the dead zones,
and modify the leaves for the suffixes that begin in the affected zones.

Let $q_k$ denote the number of selected occurrences (Type 1, 2, or 3)
of $x_k$ in $w_{k}$ to be replaced with $\#_k$.
We will replace the selected occurrences of $x_k$ from left to right.
For each $1 \leq h \leq q_k$,
let $i_h$ denote the $h$-th selected occurrence of $x_k$ from the left,
and let $w_{k}^h$ denote the string where
the $h$ occurrences $i_1, \ldots, i_h$ of $x_k$ from the left
are already replaced with $\#_k$'s.
Namely, $w_k^0 = w_k$ and $w_{k}^{q_k} = w_{k+1}$.

Suppose that we have processed the $h-1$ occurrences of $x_k$ from the left,
and we are to process the $h$-th occurrence $i_h$ of $x_k$.
Namely, we have maintained $\STree(w_k^{h-1})$ and we are to update
it to $\STree(w_k^{h})$.

\subsubsection{How to process the dead zones.}
First, we consider how to deal with the dead zone
$[i_h+1..i_h+\len_k-1]$ for this occurrence $i_h$ of $x_k$ in $w_k^{h-1}$.
Since the positions in the dead zone
will not exist in the modified string,
and since no substrings beginning in this dead zone
can be an LR in the following steps,
we remove the leaves for the suffixes that begin at
the positions in the dead zone $[i_h+1..i_h+\len_k-1]$.
In case $i_h+\len_k-1 > n$,
which can happen only when $h = q_k$,
then the dead zone for this occurrence is $[i_h+1..n]$.
In any case, we can easily remove those leaves
in linear time in the number of the removed leaves.

\subsubsection{How to process the affected zones.}
Next, we consider how to deal with the affected zone
$[i_h-\len_k..i_h]$ for this occurrence $i_h$ of LR $x_k$ in $w_k^{h-1}$.
Let $y = w_k^{h-1}[i_h-\len_k..i_h-1]$,
namely, $y$ is the left context of length $\len_k$
from the occurrence of $x_k$ at position $i_h$.
Let $y'$ be the longest non-empty suffix of $y$
such that $x_k$ down the locus of $y'$ spans more than one edge in the tree.
If such a node does not exist, then let $y' = \varepsilon$.
For each suffix of $y$ that is longer than $y'$,
$x_k$ down its locus is represented on a single edge.
Hence, it is ``automatically'' be replaced with $\#_k$
by replacing the occurrence of $x_k$ at position $i_h$ in the current string
$w_k^{h-1}$ with $\#_k \bullet^{\len_k-1}$.
Therefore, no explicit maintenance on the tree topology is needed
for these suffixes of $y$.

Now we consider the suffixes $y_j = y[j..\len_k-1]$ of $y$
that are not longer than $y'$, where $j = \len_k-|y'|+1, \ldots, \len_k-1$.
Now $x_k$ down the locus of each $y_j$
spans more than one edge,
and it will have to be replaced with a (single) special symbol $\#_k$.
This introduces some changes in the tree topology.
We note that the locus of $y_j x_k$ in the suffix tree
before the update is on the edge that leads to the leaf with id $i_h-|y_j|$,
since otherwise $y_j x_k$ must occur twice in the string,
which contradicts our longest first strategy.
Thus, we re-direct the edge
that leads to the leaf with id $i_h-|y_j|$ from its original parent
to the node that represents $y_j$
(if it is an implicit node, then we create a new explicit node there).
See Figure~\ref{fig:leaf_edge_redirection} for illustration.
This event can also be found at the first step of our concrete example
shown in Appendix.

\begin{figure}[h]
  \centerline{
    \includegraphics[scale=0.6]{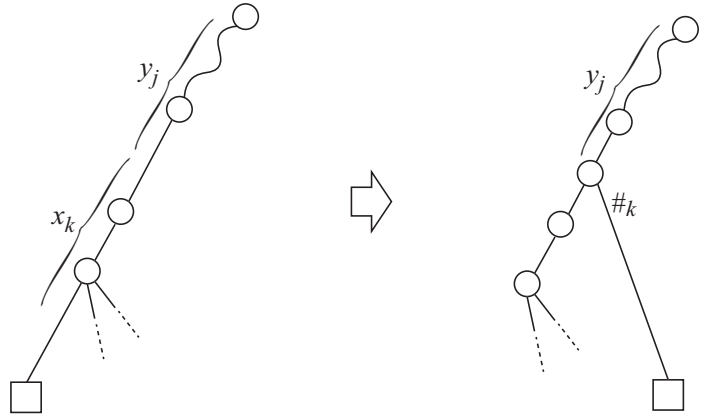}
  }
  \caption{Illustration for a leaf edge redirection,
    where the circles represent internal explicit nodes and the
    square represents the leaf with id $i_h-|y_j|$.
    Since $x_k$ down the locus of $y_j$ spans more than one edge,
    the leaf edge is redirected from its original parent to $y_j$.
    This figure shows the case where a new internal node for $y_j$ is created.}
  \label{fig:leaf_edge_redirection}
\end{figure}

The remaining problem is how to find the loci for
the suffixes of $y$ in the tree.
We find them in decreasing order of their length.
For the first suffix $y[1..\len_k] = y$,
we find the locus of $y$
by simply traversing $y$ from the root of the suffix tree.
There are two cases to consider:
\begin{enumerate}
\item[(A)]
If this locus for $y_1 = y$
is an explicit node in $\STree(w_{k}^{h-1})$,
then by the property of the suffix tree,
all suffixes of $y$ are also represented by explicit nodes.
Hence, we can find the loci for all the suffixes
using a chain of suffix links from node $y$ down to the root.

\item[(B)]
If this locus for $y_1 = y$
is an implicit node in $\STree(w_{k}^{h-1})$,
then we use the suffix link of the parent $u_{1}$ of $y_{1}$.
Let $u'_2 = \slink(u_1)$.
We go downward from $u'_2$ until finding
the deepest node $u_2$ whose string depth is not greater than
$|y_2| = \len_k-1$.
If the string depth $u_2$ equals $|y_2|$ (i.e. $|u_2| = |y_2|$),
then the locus of $y_2$ is on an explicit node.
Hence, we can continue with $y_3$ as in Case (A) above.
Otherwise (if $|u_2| < |y_2|$),
then the locus of $y_2$ is on an out-going edge of $u_2$.
We then continue with $y_3$ in the same way as for $y_2$.
\end{enumerate}

Suppose we have processed all the $q_k$ selected occurrences
of $x_k$ in $w_k$.
The next lemma guarantees that
re-direction of the leaf edges do not break
the property of the suffix tree.

\begin{lemma} \label{lem:distinct_child_labels}
Let $v$ be any non-root internal explicit node of the the tree
obtained by updating $\STree(w_k^{h-1})$ as above.
Then, the labels of the out-going edges of $v$
begin with mutually distinct characters.
\end{lemma}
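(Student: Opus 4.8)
The plan is to argue by induction on $h$, taking as the inductive hypothesis that $\STree(w_k^{h-1})$ satisfies the distinct-child-label property, and showing that the tree obtained after processing the $h$-th occurrence $i_h$ still does. The update performs three kinds of local edits: it deletes the dead-zone leaves, it detaches each redirected leaf from its old parent $u$, and it reattaches that leaf to the node $y_j$ (creating $y_j$ explicitly if it was implicit). Deleting a leaf and detaching a leaf from $u$ only remove out-going edges, and splitting an edge to materialise $y_j$ leaves the first character of every pre-existing edge unchanged; none of these can turn two distinct labels into equal ones. Hence the only places a violation could appear are the targets $y_j$, and I would reduce the entire lemma to checking those nodes. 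The structural fact I record first is that \emph{every} redirected edge begins with the same character $\#_k$: the leaf with id $i_h-|y_j|$ spells $y_j$ immediately followed by the occurrence of $x_k$ at $i_h$, which becomes $\#_k$ in $w_k^h$. Since distinct $y_j$ have distinct lengths, each target node receives exactly one redirected edge in this sub-step, so a clash at $y_j$ can only be between the new $\#_k$-edge and a pre-existing out-going edge of $y_j$ that also begins with $\#_k$.

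Next I would dispose of the case where $y_j$ is created by splitting an edge on the path to the leaf $i_h-|y_j|$. There $y_j$ has exactly two children: the redirected leaf (beginning with $\#_k$) and the lower part of the split edge, whose first character is the symbol following $y_j$ along that path, namely the first character of $x_k$. Because $x_k$ is the current LR it is a substring of $w_k$ and therefore cannot contain the fresh symbol $\#_k$, so this character differs from $\#_k$ and no clash arises. This reduces the lemma to the single genuine case: $y_j$ is already an explicit node of $\STree(w_k^{h-1})$, and I must show it has no pre-existing $\#_k$-child.

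For that case I would reuse the longest-first argument in the style of Lemma~\ref{lem:LR_node_leaf_children}. A pre-existing $\#_k$-child of $y_j$ is witnessed by a surviving leaf $p$ with $w_k^{h-1}[p\mathinner{..}p+|y_j|-1]=y_j$ and $w_k^{h-1}[p+|y_j|]=\#_k$. As $\#_k$ sits only at the already-processed occurrences $i_1,\dots,i_{h-1}$, this forces $p+|y_j|=i_{h'}$ for some $h'<h$, so $y_j$ is the length-$|y_j|$ left context of $i_{h'}$ as well as of $i_h$. When $y_j$ consists of original characters of $w_k$, tracing both contexts back to $w_k$ exhibits two distinct occurrences of $y_j x_k$, a repeat strictly longer than $x_k$, contradicting that $x_k$ is a longest repeat of $w_k$; hence no such child exists and appending the $\#_k$-edge keeps the labels at $y_j$ distinct, closing the induction.

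The hard part will be the degenerate configurations in which the affected zone of $i_h$ abuts a dead zone or a previous $\#_k$, so that $y_j$ is not guaranteed to consist only of original characters and the clean ``two occurrences of $y_j x_k$ in $w_k$'' transfer fails. The delicate scenario is that of (near-)consecutive selected occurrences $i_{h'}=i_h-\len_k$, where the left context $y$ reduces to a periodic pattern such as $\#_k\bullet^{\len_k-1}$ and two redirected $\#_k$-edges could in principle be sent to the same node. I would control this by combining three facts that the construction already supplies: the dead-zone leaves are removed, so no redirected leaf can \emph{begin} at a padded position $\bullet$; the suffixes actually redirected are only those $y_j\le y'$ for which $x_k$ below $y_j$ spans more than one edge, which excludes precisely the unique-occurrence cases where $y_j x_k$ lies on a single leaf edge; and the non-overlap of the Type~2/3 occurrences bounds how far a dead zone can intrude into the affected zone. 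Making this dichotomy precise — showing that whenever a second $\#_k$-child could exist, the branching hypothesis defining the redirected suffixes is not met — is the one place where real care is required, and it is where I would spend the bulk of the argument.
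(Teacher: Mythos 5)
Your core argument coincides with the paper's: every redirected edge begins with $\#_k$, which differs from all earlier $\#_j$ and does not occur in $w_k$, so a violation would force two out-going $\#_k$-edges at $v$, hence two occurrences of $v x_k$ --- a repeat strictly longer than the LR $x_k$, a contradiction. The ``hard part'' you defer at the end (transferring two occurrences of $v\#_k$ in the partially updated string back to two occurrences of $v x_k$ in $w_k$ when $v$ might touch padded or previously replaced positions) is precisely the step the paper's own proof passes over in a single line, so your proposal is essentially the same proof with that delicate point made explicit rather than elided.
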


\begin{proof}
  Notice that in each $k$-th step,
  the label of any re-directed edge begins with $\#_k$.
  Since $\#_k \neq \#_j$ for any $1 \leq j < k$
  and $\#_k$ does not occur in $w_k$,
  it suffices for us to show that
  there is at most one out-going edge of $v$
  whose label begins with $\#_k$.

  If there are two out-going edges of $v$
  whose labels begin with $\#_k$,
  then there are at least two leaves whose path label begin with $v \#_k$.
  Thus $v \#_k$ occurs in $w_k$ at least twice.
  Since $v$ is not a root, $|v| \geq 1$.
  If $x_k$ is the LR that was replaced by $\#_k$,
  then $|v x_k| > |x_k| = \len_k$, which contradicts that $x_k$
  was an LR at the $k$-th step.

  Thus, the labels of out-going edge of any node $v$
  begin with mutually distinct characters.
\end{proof}

The root of the resulting tree
has a new child which represents $\#_k$,
and the children of this new node are the leaves
that correspond to the occurrences of the LR
that have been replaced by $\#_k$.

Notice that the affected zone $[i_h-\len_k..i_h-1]$
for the occurrence $i_h$ may overlap with
the dead zone $[i_{h-1}+1..i_{h-1}+\len_k-1]$ for the
previous occurrence $i_{h-1}$.
In this case, the affected zone for $i_h$
is trimmed to $[i_{h-1}+\len_k..i_h-1]$ and we perform
the same procedure as above for this trimmed affected zone.

\begin{lemma}
  Our algorithm updates $\STree(w_k)$ to $\STree(w_{k+1})$
  for every $k$-th step in a total of $O(n \log \sigma)$ time
  with $O(n)$ space.
\end{lemma}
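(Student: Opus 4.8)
The plan is to bound the per-step update cost by splitting the work done for each selected occurrence $i_h$ of the LR $x_k$ into (i) deletion of the dead-zone leaves and (ii) the surgery on the affected zone, and then to charge both quantities against the positions that permanently die during the step. Since on a general ordered alphabet moving from an explicit node to the child edge whose label starts with a prescribed character costs $O(\log\sigma)$, the whole argument reduces, up to this $O(\log\sigma)$ factor, to counting elementary node operations: branch-by-character, leaf deletion, unary-node contraction, edge re-direction, and node creation.

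First I would handle the dead zones. Deleting the leaves whose id's lie in a dead zone, and contracting any parent that thereby becomes unary, costs $O(\log\sigma)$ per deleted leaf. The key point is that all dead zones, over all occurrences of all steps, are pairwise disjoint: once a position enters a dead zone its leaf is deleted and the position is never restored in a later string, consistent with Lemma~\ref{lem:no_hash_included}, which guarantees that no subsequent LR reaches into previously replaced material. Hence the total number of deleted leaves is at most $n$ and the dead-zone work is $O(n\log\sigma)$ in total.

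Second, the substantive part is the affected zones. For a fixed $i_h$ I process the suffixes $y_j$ of the left context $y = w_k^{h-1}[i_h-\len_k..i_h-1]$ in decreasing length. The locus of the longest one, $y$ itself, is found by a skip/count descent from the root in $O(\len_k\log\sigma)$ time, and each shorter suffix is located by a McCreight-style step: follow the suffix link of the current parent and rescan downward. Taking the potential to be the string depth of the current locus --- which decreases by one at each suffix link and increases only by downward rescanning --- a standard amortized argument shows that locating \emph{all} suffixes of $y$ visits only $O(\len_k)$ nodes, so this costs $O(\len_k\log\sigma)$, while the re-directions and node creations add $O(\log\sigma)$ per processed suffix. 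I expect the main obstacle to be exactly this per-occurrence bound in a tree that is simultaneously being restructured: one must argue that the suffix links remain available and are correctly installed on the freshly created nodes as the descent proceeds, and that the in-place edge re-directions (each beginning with the fresh symbol $\#_k$, cf.\ Lemma~\ref{lem:distinct_child_labels}) do not corrupt the loci of the shorter suffixes still to be processed --- both being handled as in McCreight's construction.

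Summing the affected-zone cost gives $O(\log\sigma)\sum_k q_k\len_k$, so it remains to show $\sum_k q_k\len_k = O(n)$. Since the selected occurrences are either pairwise non-overlapping (Type 2 and 3) or lie beyond the end $e$ of the Type 1 dead zone, every selected occurrence carries a full dead zone of size $\len_k-1\ge 1$, and these dead zones are globally disjoint by the observation above. Therefore $\sum_k q_k(\len_k-1)\le n$, which simultaneously yields $\sum_k q_k\le n$ (each replaced occurrence kills at least one distinct position), and hence
\[ \sum_k q_k\len_k = \sum_k q_k(\len_k-1) + \sum_k q_k \le 2n = O(n). \]
Combining the three parts gives total time $O(n\log\sigma)$. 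For the space bound, the maintained tree is at every moment the suffix tree of a length-$n$ string (padded with $\bullet$'s), hence has $O(n)$ nodes, and together with the array $B_k$ and the sorting buffers the working space is $O(n)$; correctness of the updated structure as $\STree(w_{k+1})$ rests on the branching property of Lemma~\ref{lem:distinct_child_labels} together with the leaf-deletion and contraction bookkeeping.
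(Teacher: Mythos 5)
Your proposal is correct and follows essentially the same route as the paper's proof: locate the longest left-context suffix by a root descent, handle the shorter suffixes by suffix-link-plus-rescan with the string-depth telescoping (the paper writes this as $\sum_{j}(|u_j|-|u_{j-1}|+1) < 2\len_k$), and charge the resulting $O(\len_k\log\sigma)$ per occurrence to the $\len_k$ positions consumed by the replacement with $\#_k\bullet^{\len_k-1}$. Your explicit accounting of the dead-zone deletions and the summation $\sum_k q_k\len_k \le 2n$ only spells out what the paper leaves implicit.
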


\begin{proof}
  First, let us confirm the correctness of our algorithm.
  It follows from Lemma~\ref{lem:LR_node_leaf_children} that
  in each $k$-th step the new internal explicit nodes
  that are created in this step
  can have string depth at most $\len_k$.
  Therefore, in terms of updating $\STree(w_k)$ to $\STree(w_{k+1})$,
  it suffices for us to consider
  only the affected zone for each occurrence of LR $x_k$.
  Lemma~\ref{lem:distinct_child_labels} guarantees
  that the label of the out-going edges of the same node
  begin with mutually distinct characters.
  It is clear that the leaves for the suffixes which begin in the dead zones
  have to be removed, and only those leaves are removed.
  Thus, our algorithm correctly updates $\STree(w_k)$ to $\STree(w_{k+1})$.

  Second, let us analyze the time complexity of our algorithm.
  For each occurrence $i_h$ of $x_k$,
  finding the locus for the first suffix $y = w_k^{h-1}[i_h-\len_k..i_h-1]$
  takes $O(\len_k \log \sigma)$ time.
  Then, the worst case scenario is that
  Case (B) happens for all $\len_k$ suffixes of $y$.
  For each shorter suffix $y[i..\len_k]$ with $i = 2, \ldots, \len_k$,
  the above algorithm traverses at most
  $|u_{j}|-|u'_{j}| = |u_j|-|\slink(u_{j-1})| = |u_j|-|u_{j-1}|+1$ edges.
  Hence, for all the shorter suffixes of $y$,
  the number of edges traversed is bounded by
  $\sum_{j=2}^{\len_k}(|u_j|-|u_{j-1}|+1) =
  |u_{\len_k}|-|u_1|+\len_k-1 < 2 \len_k$.
  Hence, finding the locus for the shorter suffixes of $y$
  also takes $O(\len_k \log \sigma)$ time.
  The $\len_k$ term in the $O(\len_k \log \sigma)$ complexity
  can be charged to each selected occurrence of LR $x_k$,
  which is replaced with $\#_k \bullet^{\len_k-1}$.
  Therefore, the total time cost
  to update the suffix tree for all steps is $O(n \log \sigma)$.
  The space usage is clearly $O(n)$.
\end{proof}

\subsection{How to update $B_k$ to $B_{k+1}$}

Suppose we have $B_{k}$ in the $k$-th step,
and we would like to update it to $B_{k+1}$ for the next $(k+1)$-th step.
Let $u$ be an internal branching node of $\STree(w_{k-1})$
that is to be removed in $\STree(w_k)$.
This can happen when $u$ has only two children,
one of which is a leaf to be removed from the current suffix tree.
We then remove $u$ from the list stored in $B_{k-1}[|u|]$,
and connect its left and right neighbors in the list.

When we replace an LR $x_k$ of $w_{k}$
with $\#_k \bullet^{\len_k-1}$,
an implicit node $v$ of $\STree(w_{k})$
may become branching due to the new symbol $\#_k$
and hence a new explicit internal node for $v$
needs to be created to the suffix tree.
In this case, we add this new node for $v$ at the end of
the list stored in $B_{k}[|v|]$.
After these procedures are performed for all such nodes,
we obtain $B_{k+1}$ for the next $(k+1)$-th step.

\begin{lemma}
  At every $k$-th step,
  we can update $B_k$ and maintain $\len_k$ in a total of $O(n)$ time and space.
\end{lemma}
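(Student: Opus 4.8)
The plan is to represent each list $B_k[l]$ as a doubly-linked list in which every explicit internal node also stores a back-pointer to the list cell that currently holds it. With this representation, both operations in the update description become $O(1)$: removing a dissolved node $u$ from $B_{k-1}[|u|]$ is a splice-out reached through $u$'s back-pointer, and appending a newly created node $v$ to the end of $B_k[|v|]$ is a constant-time push. Hence the total cost of maintaining the lists is proportional to the total number of node insertions and deletions over all $m$ steps. Each insertion corresponds to a new explicit internal node created while updating $\STree(w_k)$ to $\STree(w_{k+1})$, and the previous lemma already bounds the total number of such creations by $O(n)$; symmetrically, a node can be deleted at most once after it has been created, so the total number of deletions is also $O(n)$. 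Since the suffix tree has $O(n)$ nodes at any moment, the space is $O(n)$ and the list bookkeeping costs $O(n)$ overall.

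The remaining task is to maintain $\len_k$, i.e. the smallest index $l \ge 2$ with $B_k[l]$ nonempty, without rescanning $B$ at each step. First I would establish that $\len_k$ is non-increasing in $k$: every repeat of $w_{k+1}$ that avoids $\#_k$ occurs at the same positions with the same characters in $w_k$, hence is a repeat of $w_k$ as well, so its length is at most $\len_k$; therefore $\len_{k+1} \le \len_k$. Combining this with the invariant that, at step $k$, all entries $B_k[l]$ with $2 \le l < \len_k$ are empty, I observe that an index $l < \len_k$ can turn nonempty in $B_{k+1}$ only through the creation of a fresh node of string depth exactly $l$ during the current update, since deletions remove only nodes already present in $B_k$, all of which have depth at least $\len_k$, and thus cannot populate a previously empty lower entry. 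Consequently it suffices, while performing the insertions of the current step, to track the minimum string depth $d^{*}$ among the nodes created in this step. After the update, $\len_{k+1} = d^{*}$ if $d^{*} < \len_k$, and otherwise $\len_{k+1} = \len_k$ provided $B_{k+1}[\len_k]$ is still nonempty; if instead $d^{*} \ge \len_k$ and $B_{k+1}[\len_k]$ has become empty, then by the non-increasing property together with the emptiness of all lower entries the whole array $B_{k+1}$ is empty above depth one, which is exactly the termination condition $w_{k+1}=w_m$. Tracking $d^{*}$ costs $O(1)$ per insertion and the final decision is $O(1)$ per step, adding only $O(n)$ in total.

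I expect the list operations to be entirely routine, so the main obstacle is the efficient maintenance of $\len_k$: the whole argument hinges on the monotonicity $\len_{k+1}\le\len_k$ and on the structural fact that entries strictly below $\len_k$ can be populated only by the current step's insertions. Establishing these two facts cleanly, and in particular verifying that the seemingly exceptional case ($d^{*}\ge\len_k$ with $B_{k+1}[\len_k]$ emptied) coincides precisely with termination rather than with an upward jump of the leftmost nonempty index, is where I would be most careful. Given these, both the time and space bounds of $O(n)$ follow.
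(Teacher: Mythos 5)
Your proof takes essentially the same route as the paper's: doubly-linked lists giving $O(1)$ insertion and deletion, an $O(n)$ bound on the total number of list operations obtained by charging newly created nodes to affected-zone positions (each position of the original string is involved in at most one replacement), and the monotonicity of $\len_k$ combined with the observation that only freshly created nodes can populate an entry below the current $\len_k$. The one place where you are more explicit than the paper --- maintaining $\len_k$ via the minimum string depth $d^{*}$ among the nodes created in the current step --- is also the one place with a small hole. A node created while processing occurrence $i_h$ can be dissolved \emph{later in the same $k$-th step}: when the dead zone of a subsequent occurrence $i_{h'}$ ($h'>h$) is processed, leaves below the new node may be removed and it may cease to be branching. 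If the only node of depth $d^{*}$ dies this way, $B_{k+1}[d^{*}]$ is empty at the end of the step and the rule $\len_{k+1}=d^{*}$ is wrong. The robust implementation (and presumably what the paper intends by ``easily keep track of $\len_k$'') is to scan downward from the previous value $\len_k$ until a non-empty entry of $B_{k+1}$ is found; since $\len_k$ is non-increasing and new nodes have depth at most $\len_k$, the total scanning cost telescopes to at most $\len_1 \le n$. Your $d^{*}$ scheme can be repaired by simply re-checking whether $B_{k+1}[d^{*}]$ is non-empty at the end of the step and falling back to such a downward scan otherwise; with that patch, your termination-case analysis (all entries of index at least two empty exactly when the algorithm stops) and both complexity bounds go through.
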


\begin{proof}
  Initially, at most $n-1$ internal nodes are stored in $B_1$.
  Also, the total number of newly created nodes
  is bounded by the total size of the affected zones
  for the replaced occurrences of the LRs in all the steps,
  which can be charged to the positions that are taken
  by replacement of LRs for all the steps.
  As was shown in the previous subsection,
  once a position in the original string
  is taken by replacement of an LR,
  then this position will never be considered in the following steps.
  Thus, the total number of newly created nodes is bounded by $n$.
  Clearly, computing the initial array $B_1$ from $\STree(w_1)$
  takes $O(n)$ time,
  and deletion and insertion of a node
  on a list stored at an entry of $B_k$ takes $O(1)$ time each
  (we use doubly linked lists here).

  It follows from Lemma~\ref{lem:LR_node_leaf_children}
  and our suffix tree update algorithm that
  at each $k$-th step
  any newly created node has string depth at most $\len_k$,
  and $\len_k$ is monotonically non-increasing as $k$ grows.
  Hence, we can easily keep track of $\len_k$ for all steps
  in a total of $O(n)$ time.

  The space usage is clearly $O(n)$.
  \qed
\end{proof}

After computing $w_m$ for the final $m$-th step,
we replace every $\#_k$ in $w_m$ with $\#$ for every $k$,
and obtain the final string $w'$ for LZ-LFS.

Summing up all the discussions above,
we have proved our main result in Theorem~\ref{theo:main_result}.

In Appendix,
we show a concrete example on how
our suffix-tree based LZ-LFS algorithm works.

\section{$O(n)$-time algorithm for simplified LZ-LFS}

In this section, we show that
a simplified version of LZ-LFS can be computed in
$O(n)$ time and space,
by a slight modification to our $O(n \log n)$-time LZ-LFS algorithm
from Section~\ref{sec:nlogn_algorithm}.

By a ``simplified version'' of LZ-LFS,
we mean a variant of LZ-LFS where
Type 3 non-overlapping occurrences of an LR of each step can
be selected arbitrarily (namely, not necessarily in a left-greedy manner).
More formally, in our simplified version of LZ-LFS,
an occurrence $i$ of $x$ in $w$ is said to be of
Type 1/2 if the corresponding condition
as in Section~\ref{sec:LZ-LFS_prelim} holds, and
\begin{itemize}
\item Type 3 if $i$ is an occurrence of $x$ after $e$
  which is not of Type 2,
\end{itemize}
where $e$ is as defined in Equation~(\ref{eqn:e}).

Notice that there can be multiple choices for non-overlapping
Type 3 occurrences of LR $x_k$ in $w_k$ at each $k$-th step.
Our algorithm takes a maximal set of
non-overlapping Type 3 occurrences of $x_k$ in $w_k$ at each step,
so that no Type 3 occurrences remain in the string.
We remark that it is easy to compute
a maximal set of size at least
$\max\{\lceil |\LG_{w_k[e+1..]}(x_k)|/2 \rceil, 2\}$,
namely, this strategy allows us to select at least
half the number of left-greedily selected Type 3 occurrences.
Since this does not require to sort the occurrences of $x_k$,
we can perform all the steps in a total of $O(n)$ time,
as follows:

\begin{theorem}
  Given a string $w$ of length $n$ over an integer alphabet of size $n^{O(1)}$,
  our algorithm for a simplified version of LZ-LFS
  works in $O(n)$ time and space.
\end{theorem}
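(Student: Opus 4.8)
The plan is to remove the two sources of superlinear cost in the algorithm of Section~\ref{sec:nlogn_algorithm} while leaving every other ingredient untouched. The first source is the $\log\sigma$ factor incurred by child-navigation during suffix-tree construction and update. Since we now assume an integer alphabet of size $n^{O(1)}$, we build $\STree(w_1)$ in $O(n)$ time~\cite{Farach-ColtonFM00}, and every child lookup performed while locating the loci of the suffixes of $y$ (Cases~(A) and~(B)) costs $O(1)$; the special symbols $\#_k$ introduced along the way keep the alphabet of size $n^{O(1)}$, as at most $m\le n/2$ of them are ever created. Consequently the suffix-tree update, the maintenance of $B_k$, and the processing of the dead and affected zones all run in $O(n)$ total time, exactly as before but with the $\log\sigma$ factor erased.

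The second and dominant source is the $O(d_k\log d_k)$ sorting of the $d_k$ leaf children of the LR node, which in Section~\ref{sec:nlogn_algorithm} was needed only to select Type~3 occurrences in a left-greedy manner. In the simplified version Type~3 occurrences may be chosen arbitrarily, so I would replace this sort by a linear-time selection of a \emph{maximal} set of non-overlapping occurrences. First compute, in $O(d_k)$ time by scanning the children, the leftmost occurrence $\ell$, the second leftmost occurrence, and hence $e$ from Equation~(\ref{eqn:e}). Restricting to occurrences that lie after $e$, their minimum $p_{\min}$ and maximum $p_{\max}$ decide whether $|\LG_{w_k[e+1..]}(x_k)|$ is $1$ or at least $2$, according to whether $p_{\max}-p_{\min}<\len_k$ or $\ge\len_k$, which is exactly what distinguishes a Type~2 occurrence from Type~3 occurrences.

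For the Type~3 case I would run a greedy maximal-independent-set selection that avoids sorting by bucketing each occurrence $p$ into block $\lceil p/\len_k\rceil$. An auxiliary array, indexed by block and reset lazily so that only the blocks actually touched in this step are cleared afterwards, stores at most one already-selected occurrence per block. Because all occurrences share the common length $\len_k$, an occurrence in block $b$ can overlap a selected occurrence only in blocks $b-1$, $b$, or $b+1$, so each overlap test is $O(1)$, and we select $p$ iff none of the three stored occurrences overlaps it. Processing $p_{\min}$ and $p_{\max}$ first (they are non-overlapping when $p_{\max}-p_{\min}\ge\len_k$) forces the selected set to have size at least $2$, while the set is maximal because any skipped occurrence overlapped an occurrence that was, and remains, selected. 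A charging argument over the equal-length intervals, in which each selected interval can be overlapped by at most two pairwise non-overlapping occurrences, further shows the set has size at least $\lceil|\LG_{w_k[e+1..]}(x_k)|/2\rceil$, matching the stated guarantee. This selection costs $O(d_k)$ per step, and since $\sum_k d_k = O(n)$ the total is $O(n)$.

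Combining the two, every phase of the algorithm runs in $O(n)$ time and $O(n)$ space, which proves the theorem. The main obstacle I anticipate is the linear-time maximal-set selection: one must argue simultaneously that the block-and-lazy-reset bookkeeping is genuinely $O(d_k)$ with no per-step $O(n)$ reinitialisation, that the three-block overlap test is correct, and that the output is both maximal and of the claimed size. The equal-length property of the occurrences is precisely what makes all three of these hold, and verifying this carefully is the crux of the argument.
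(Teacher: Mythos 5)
Your proof is correct and follows essentially the same route as the paper: build the suffix tree in $O(n)$ time for the integer alphabet, and replace the $O(d_k \log d_k)$ sorting of the LR node's leaf children by an $O(d_k)$-time computation of the Type 1/2 occurrences and a maximal set of non-overlapping Type 3 occurrences. The one place you go beyond the paper is the block-bucketing with lazy reset: the paper merely asserts that picking children in arbitrary order and rejecting any that overlaps an already-selected occurrence costs $O(d_k)$, whereas your observation that equal lengths force at most one selected occurrence per length-$\len_k$ block (so each overlap test touches only three blocks) actually substantiates that claim.
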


\begin{proof}
  As a preprocessing,
  we build $\STree(w)$ in $O(n)$ time and space~\cite{Farach-ColtonFM00}.

  We use essentially the same approach as in the previous section.
  Namely, we maintain the suffix tree for each step of our algorithm,
  and find Type 1, 2, and/or 3 occurrences of a selected LR using
  the suffix tree that we maintain.

  Suppose that we are given a node $v$ that represents
  an LR $x_k$ in $w_k$ at the $k$-th step.
  Since all children of $v$ are leaves,
  we can easily compute the Type 1 occurrence of $x_k$ (if it exists)
  by a simple scan over the children's leaf id's.
  After this, by another simple scan,
  we can also compute the Type 2 occurrence of $x_k$ (if it exists).
  Then, we exclude the Type 1 and Type 2 occurrences,
  and any occurrences that overlap with the Type 1 and/or Type 2 occurrences,
  by removing the corresponding leaves which are children of $v$.
  We then select a maximal set of non-overlapping Type 3 occurrences of $x_k$
  by picking up a child of $v$ in an arbitrary order,
  and choosing it if it does not overlap with any already-selected occurrences.

  Let $d_k$ be the number of children of $v$.
  As in the standard LZ-LFS,
  each position of the original string can be involved in
  at most one event of the replacement of an LR.
  Hence, each step of the above algorithm takes $O(d_k)$ time,
  and thus the total time complexity for all the steps of this algorithm is
  $O(\sum_{k=1}^{m} d_k) = O(n)$,
  where $m$ is the final step.

  The space complexity is clearly $O(n)$.
\end{proof}

\section{Conclusions and further work}

LZ-LFS~\cite{MauerBO17} is a new text compression method
that has both features of Lempel-Ziv 77~\cite{LZ77} and
grammar compression with longest first substitution~\cite{nakamura09:_linear_longest_first_}.

In this paper, we proposed a suffix-tree based algorithm
for LZ-LFS that runs in $O(n \log n)$ time and $O(n)$ space,
where $n$ denotes the length of the input string to compress.
This improves on Mauer et al.'s suffix-array based algorithm
that requires $O(n^2)$ time and $O(n)$ space.
We also showed that a simplified version of LZ-LFS,
where Type 3 occurrences may not be selected in a left-greedy manner,
can be computed in $O(n)$ time and space
with slight modifications to our LZ-LFS algorithm.

There are interesting open questions with LZ-LFS,
including:
\begin{enumerate}
\item Does there exist a linear $O(n)$-time algorithm for
  (non-simplified) LZ-LFS?
  The difficulty here is to select Type 3 occurrences of each selected LR
  in a left-greedy manner.
  We remark that Nakamura et al.'s linear $O(n)$-time algorithm~\cite{nakamura09:_linear_longest_first_}
  for grammar compression with LFS does \emph{not} always
  replace the left-greedy occurrences of each selected LR, either.
  Or, do there exist $\Omega(n \log n)$ lower bounds,
  probably by a reduction from sorting?

\item Does there exist a suffix-array based algorithm
  for LZ-LFS which works in time faster than $O(n^2)$?
  This kind of algorithm could be of practical significance.
\end{enumerate}

\bibliographystyle{abbrv}
\bibliography{ref}

\clearpage
\appendix
\section{Appendix: Example}

Here we show a concrete example on
how our suffix-tree based LZ-LFS algorithm works.
Consider input string
\[
  w = w_1 = \mathtt{abbaaccabccbaabcb\$}.
\]
We preprocess $w_1$ and build $\STree(w_1)$,
which is shown below.

\begin{figure}[h]
  \centerline{
    \includegraphics[scale = 0.4]{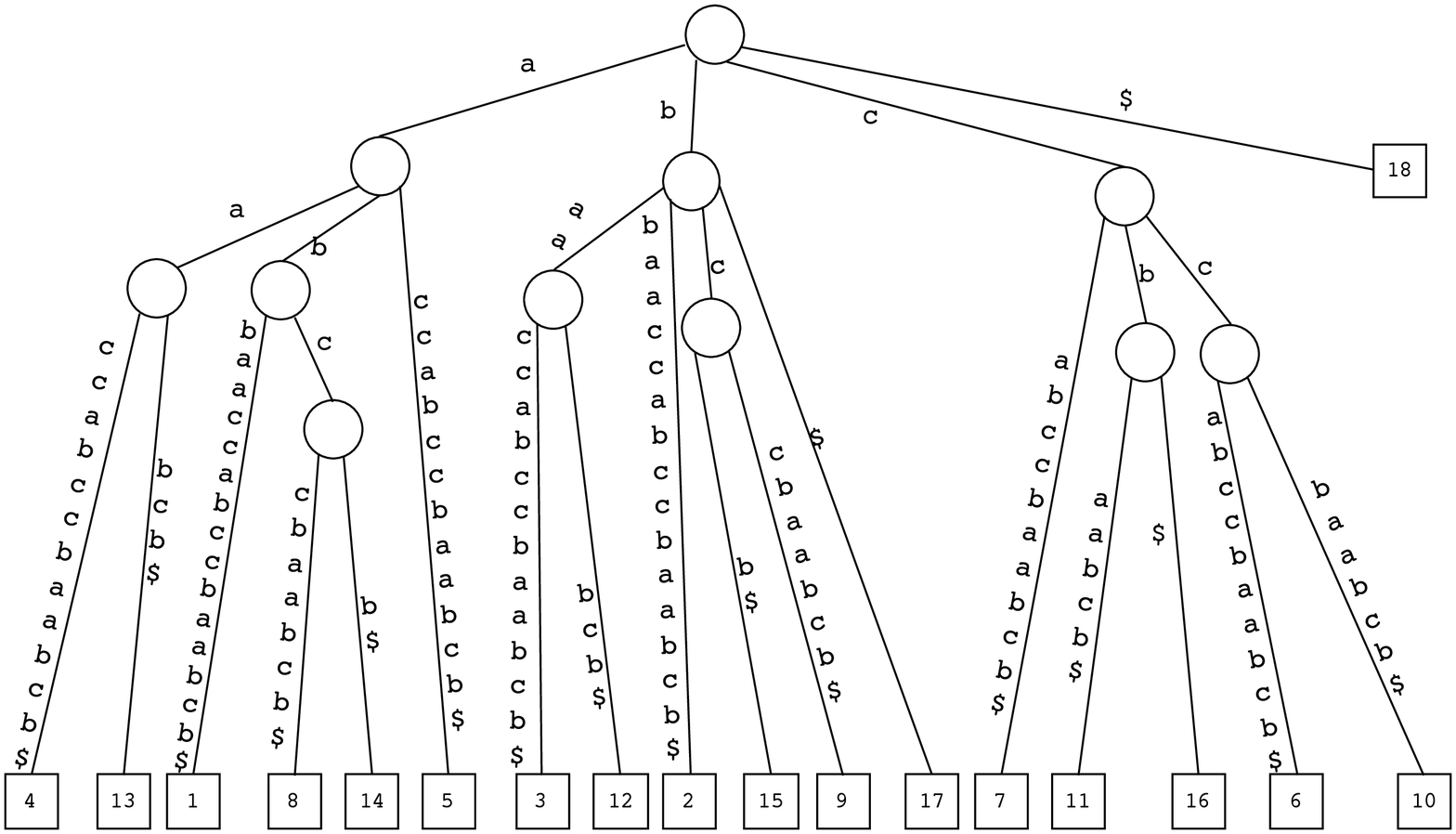}
  }
\end{figure}

Now we go on to the first step.
String $w_1$ has a unique LR $\mathtt{baa}$,
which occurs at positions $3$ and $12$.
The occurrence of $\mathtt{baa}$ at positoin 12 is replaced,
and the resulting string will be
\[
 w_2 = \mathtt{abbaaccabcc\#_1 \! \bullet \! \bullet bcb\$}.
\]

To update the suffix tree,
we remove the leaves with id $13$ and $14$ which are
in the dead zone $[13..14]$.
For simplicity, we omit any child of the root
which represents $\#_k$ for each $k$-th step.
The current tree is shown below.

\begin{figure}[h!]
  \centerline{
    \includegraphics[scale = 0.4]{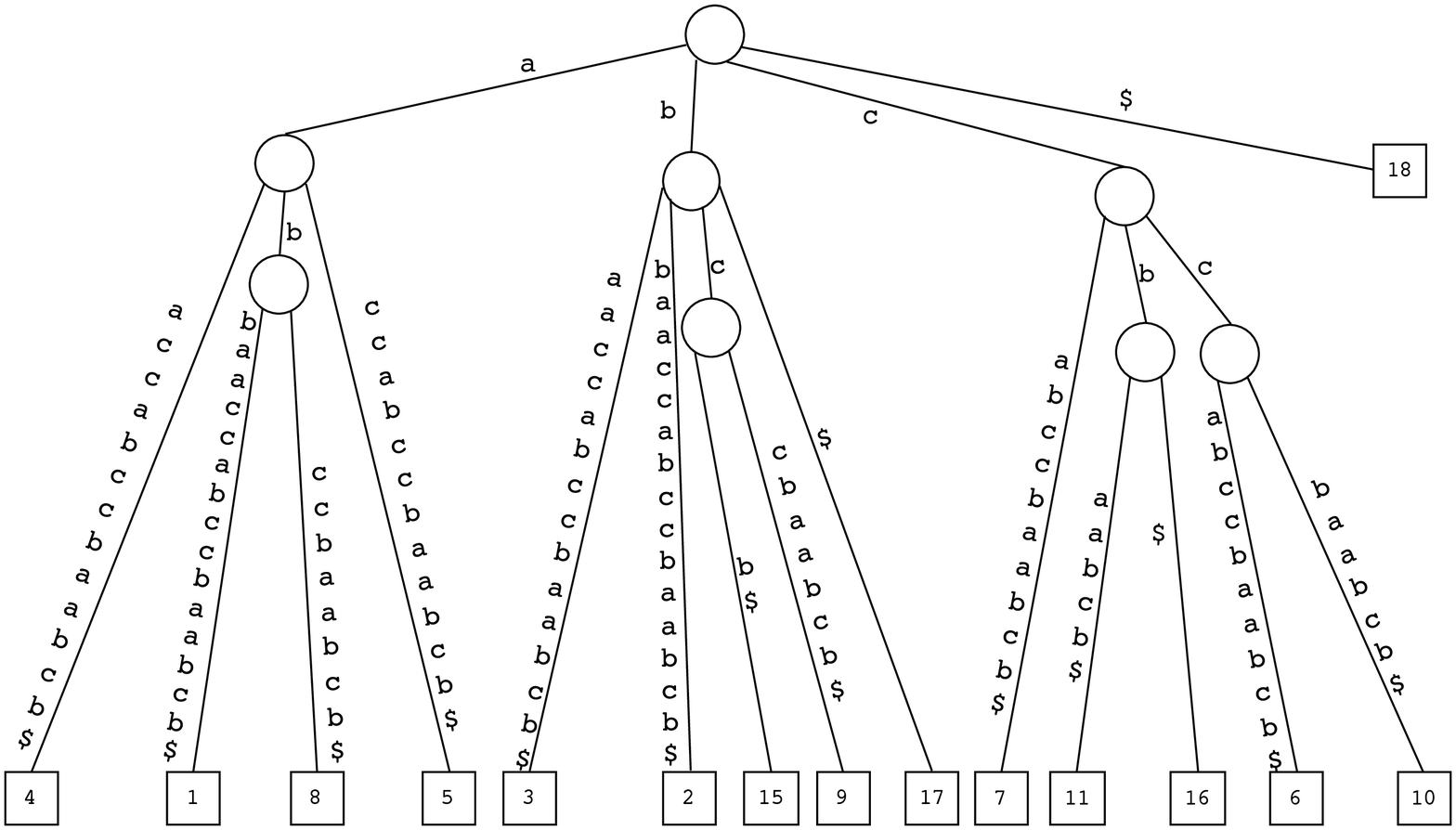}
  }
\end{figure}

Now we take care of the affected zone $[9..11]$
whose corresponding substring is $\mathtt{bcc}$.
We first find the locus of $\mathtt{bcc}$ by traversing the tree from the root.
Since $\mathtt{baa}$ down the locus of $\mathtt{bcc}$
is on an edge, no explicit maintainance is needed.
We then move to the locus of $\mathtt{cc}$
by first moving to node $\mathtt{c}$
using the suffix link of node $\mathtt{bc}$,
and reading the second $\mathtt{c}$ from there.
Again, $\mathtt{baa}$ is on an edge,
and hence no explicit maintainance is needed.
We then move to the locus of $\mathtt{c}$
by using the suffix link of $\mathtt{cc}$.
This time, $\mathtt{baa}$ spans more than one edge.
Hence, the leaf with id $11$ is redirected
from its current parent $\mathtt{cb}$ to its new parent $\mathtt{c}$.
The resulting tree $\STree(w_2) $ is the following.

\begin{figure}[h!]
  \centerline{
    \includegraphics[scale = 0.4]{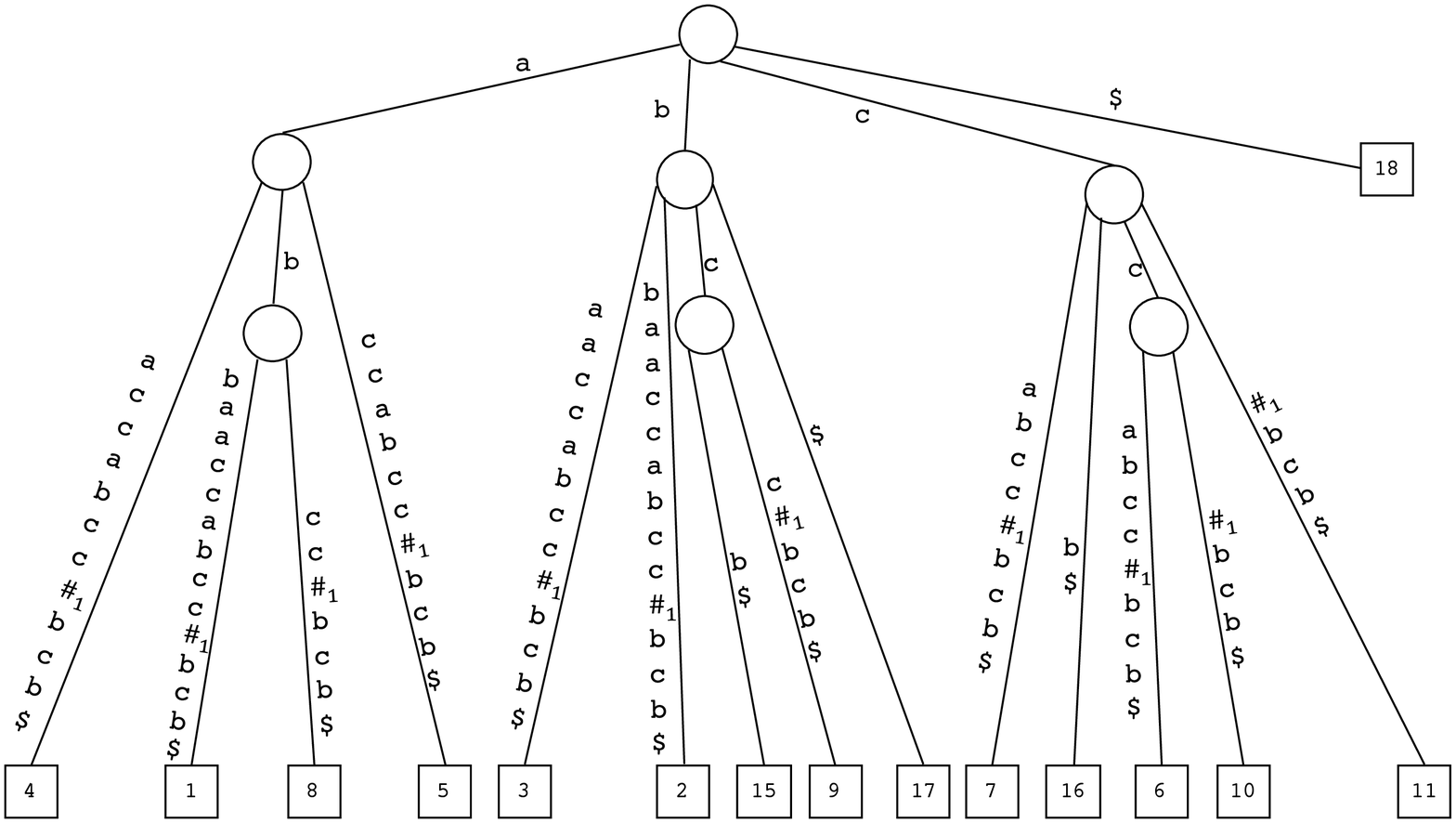}
  }
\end{figure}

Now we go on to the second step.
String $w_2$ has two LRs $\mathtt{ab}$ and $\mathtt{cc}$,
and let us choose $\mathtt{ab}$ for this second step.
This LR $\mathtt{ab}$ occurs at positions $1$ and $8$,
and $\mathtt{ab}$ occurring at position $8$ is replaced.
The resulting string will be
\[
  w_3 = \mathtt{abbaacc\#_2 \! \bullet \! cc\#_1 \! \bullet \! \bullet bcb\$}.
\]

To update the suffix tree,
we first remove the leaf for the dead zone $[9..9]$,
and the resulting tree is the following.

\begin{figure}[h!]
  \centerline{
    \includegraphics[scale = 0.4]{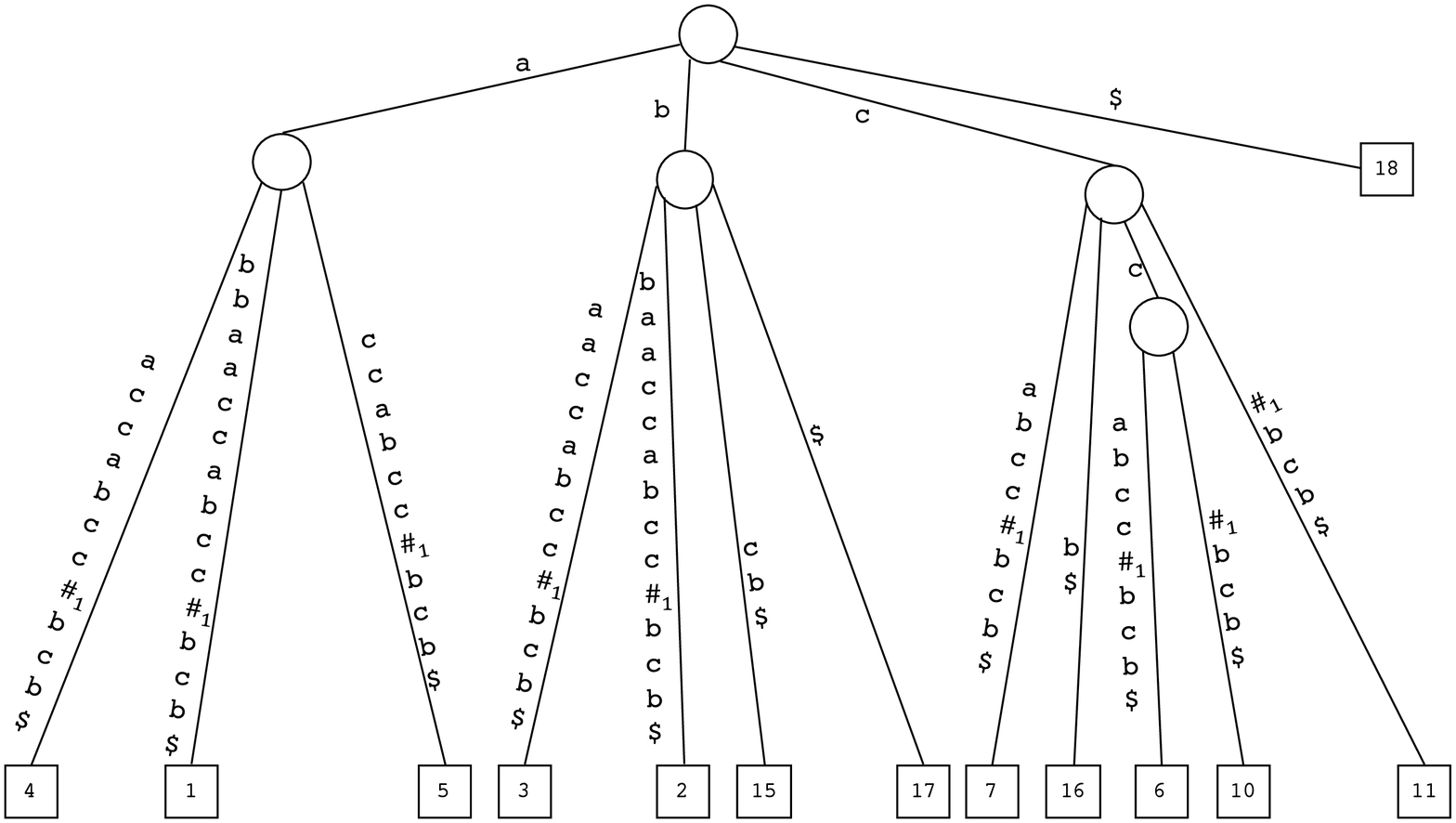}
  }
\end{figure}

Now we take care of the affected zone $[6..7]$
whose corresponding substring is $\mathtt{cc}$,
and we obtain $\STree(w_3)$ shown below.

\clearpage

\begin{figure}[h!]
  \centerline{
    \includegraphics[scale = 0.4]{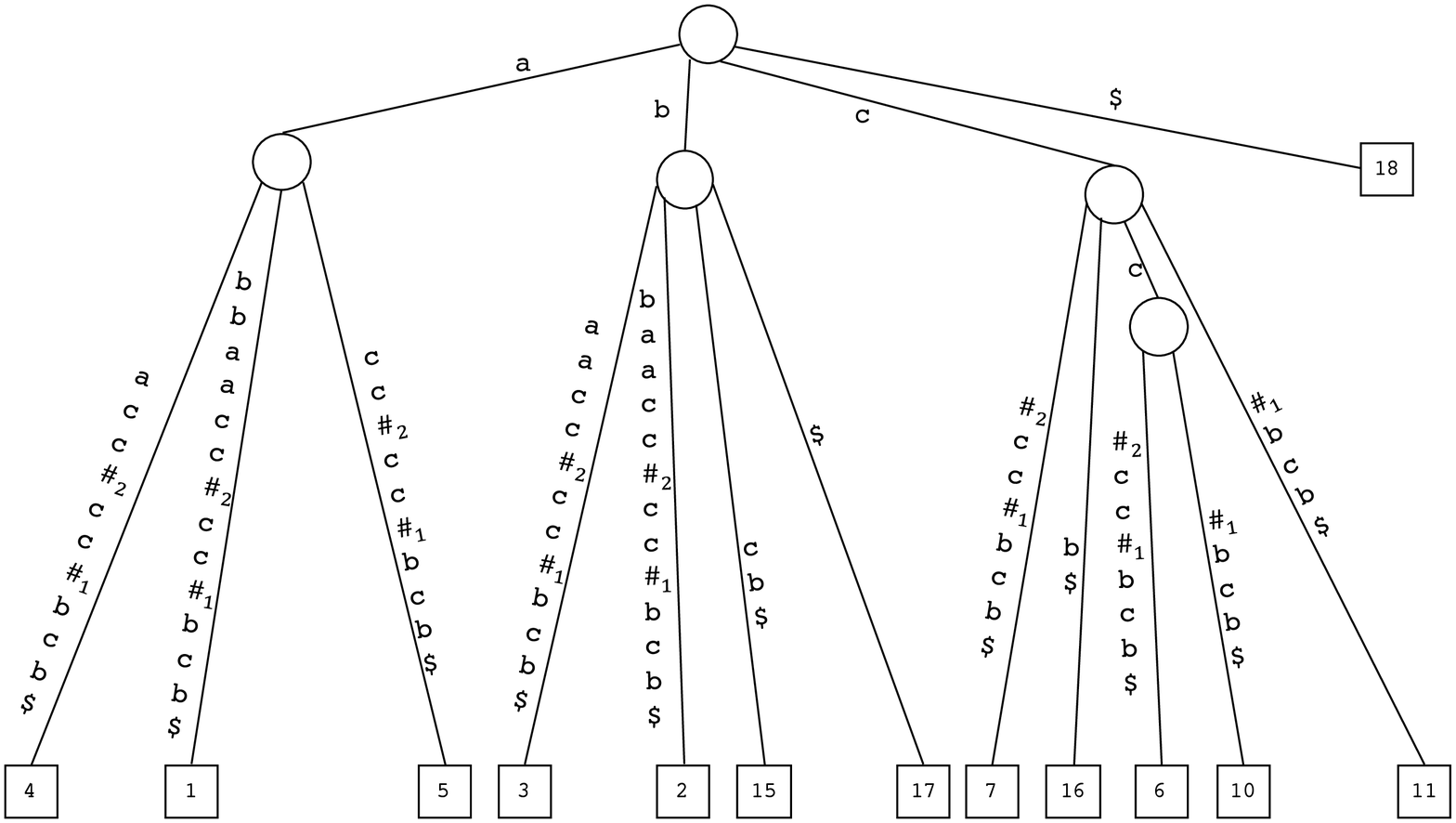}
  }
\end{figure}

Here, we remark that nodes $\mathtt{cc}$ and $\mathtt{c}$
have two out-going edges which begin with $\#_1$ and $\#_2$.
This is the reason why we use a distinct special symbol $\#_k$
for each $k$-th step.

Now we go on to the third step.
String $w_3$ has a unique LR $\mathtt{cc}$,
which occurs at positions $6$ and $10$.
The resulting string will be
\[
 w_4 = \mathtt{abbaacc\#_2 \! \bullet\! \#_3 \! \bullet \! \#_1 \! \bullet \! \bullet bcb\$}.
\]

After removing the leaf for the dead zone $[11.11]$,
we obtain the tree shown below.

\begin{figure}[h!]
  \centerline{
    \includegraphics[scale = 0.5]{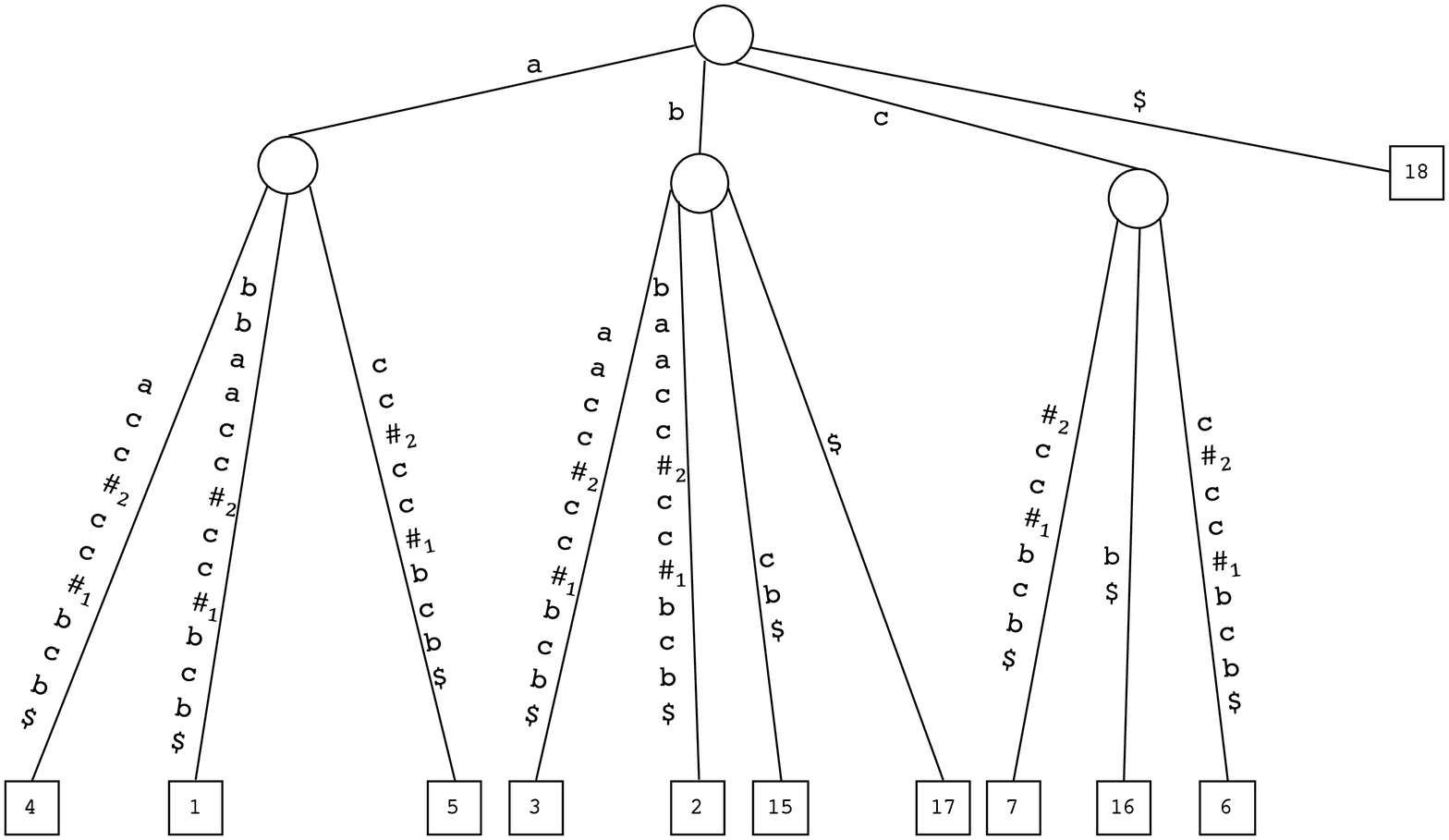}
  }
\end{figure}

Here we have a trimmed affected zone which is empty,
and hence the above tree is $\STree(w_4)$.

The current string $w_4$ contains no repeats of length at least two
consisting only of original characters.
This can also be confirmed from $\STree(w_4)$
where all the internal nodes are of string depth 1.
Hence, the algorithm terminates.

\end{document}